\documentclass[a4paper,UKenglish,cleveref, autoref, thm-restate]{lipics-v2021}
\usepackage{mathtools}
\usepackage[T1]{fontenc}
\usepackage{graphicx}
\usepackage{latexsym}
\usepackage{amsfonts}
\usepackage{amsmath,amssymb}
\usepackage{hyperref}
\usepackage{fix-cm}
\usepackage[linesnumbered,lined,ruled,noend,vlined]{algorithm2e}
\usepackage{cleveref}
\usepackage{tabularx,environ}
\usepackage{comment}
\usepackage{url}
\usepackage[full]{complexity}
\usepackage{lscape}
\usepackage{color}
\usepackage[normalem]{ulem}
\usepackage{todonotes}
\usepackage{xspace}
\usepackage{tikz}
\usepackage[appendix=inline]{apxproof}
\usepackage{cite}

\newcommand{\set}[1]{\{#1\}}

\newcommand{\inset}[2]{\{#1 | #2\}}

\newcommand{\size}[1]{| #1 |}
\newcommand{\order}[1]{O\hspace{-0.06cm}\left(#1\right)}

\newtheorem{problem}[lemma]{\bf Problem}



\Crefname{algocf}{Algorithm}{Algorithms}

\makeatletter
\@ifpackageloaded{algorithm2e}{
\SetKwProg{Fn}{Function}{}{}
\SetKwProg{Procedure}{Procedure}{}{}
\SetKwProg{Subprocedure}{Subprocedure}{}{}
\SetKwComment{tcc}{//}{}
\SetKw{Continue}{continue}
\SetKwFunction{Output}{Output}%
\SetKwFunction{EnumMC}{EMC}
\SetKwInOut{AlgInput}{Input}
\SetKwInOut{AlgOutput}{Output}
\SetKwInOut{AlgPrecondition}{Pre-conditions}
\SetKwInOut{AlgInvariant}{Invariants}
}{}
\makeatother

\newcommand{\ms}[1]{{\ifthenelse{\equal{#1}{}}{{ ms}}{{ ms}\xspace\left(#1\right)}}} 
\newcommand{\li}[1]{{\ifthenelse{\equal{#1}{}}{\ell}{\ell\xspace\left(#1\right)}}} 
\newcommand{\posms}[1]{{\ifthenelse{\equal{#1}{}}{{ p}}{{ p}\xspace\left(#1\right)}}} 

\newcommand{\comp}[1]{{\ifthenelse{\equal{#1}{}}{{\tt \mu}}{{\tt \mu}\xspace\left(#1\right)}}}
\newcommand{\mcc}[2]{{\ifthenelse{\equal{#1}{}}{{\tt mcc}}{{\tt mcc}\xspace\left(#1, #2\right)}}}
\newcommand{\minimize}[1]{{\ifthenelse{\equal{#1}{}}{{\Gamma}}{{\Gamma}\xspace\left(#1\right)}}}

\newtheoremrep{thm}[theorem]{Theorem}
\newtheoremrep{lem}[lemma]{Lemma}
\newtheoremrep{obs}[observation]{Observation}

\crefname{obs}{observation}{observations}
\Crefname{obs}{Observation}{Observations}

\nolinenumbers



\bibliographystyle{plainurl}

\title{On the Complexity of Hyperpath and Minimal Separator Enumeration in Directed Hypergraphs}
\titlerunning{On the Complexity of Hyperpath and Minimal Separator Enumeration}


\author{Kazuhiro Kurita}{Nagoya University, Nagoya, Japan}{kurita@i.nagoya-u.ac.jp}{https://orcid.org/0000-0002-7638-3322}{
This work is partially supported by JSPS KAKENHI Grant Numbers 
JP22H03549, 
JP25K21273, 
JP25K03080, and 
JP25K00136. 
}

\author{Kevin Mann}{Universit\"at Trier, Trier, Germany}{mann@uni-trier.de}{https://orcid.org/0000-0002-0880-2513}{[funding]}

\authorrunning{K. Kurita and K. Mann} 

\Copyright{Jane Open Access and Joan R. Public} 


\begin{CCSXML}
<ccs2012>
   <concept>
       <concept_id>10002950.10003624.10003633.10010917</concept_id>
       <concept_desc>Mathematics of computing~Graph algorithms</concept_desc>
       <concept_significance>500</concept_significance>
       </concept>
 </ccs2012>
\end{CCSXML}

\ccsdesc[500]{Mathematics of computing~Graph algorithms}

\keywords{Output-sensitive Enumeration, Directed hypergraph, $s$-$t$ hyperpath, Minimal $s$-$t$ separator, \NP-hardness, Polynomial-delay}

\category{} 

\relatedversion{} 



\acknowledgements{This work was supported by the Research Institute for Mathematical Sciences, an International Joint Usage/Research Center located at Kyoto University. We would like to thank the anonymous reviewers for their valuable comments and suggestions, which helped improve the quality of this paper.}

\begin{document}

\maketitle              
\begin{abstract}
In this paper, we address the enumeration of (induced) $s$-$t$ paths and minimal $s$-$t$ separators.
These problems are some of the most famous classical enumeration problems that 
can be solved in polynomial delay by simple backtracking for a (un)directed graph.
As a generalization of these problems, we consider the (induced) $s$-$t$ hyperpath and minimal $s$-$t$ separator enumeration in a \emph{directed hypergraph}.
We show that extending these classical enumeration problems to directed hypergraphs drastically changes their complexity.
More precisely,
there are no output-polynomial time algorithms for the enumeration of induced $s$-$t$ hyperpaths and minimal $s$-$t$ separators unless $\P = \NP$, and 
if there is an output-polynomial time algorithm for the $s$-$t$ hyperpath enumeration, then the minimal transversal enumeration can be solved in output polynomial time even if a directed hypergraph is $BF$-hypergraph.
Since the existence of an output-polynomial time algorithm for the minimal transversal enumeration has remained an open problem for over 45 years, 
it indicates that the $s$-$t$ hyperpath enumeration for a $BF$-hypergraph is not an easy problem.
As a positive result, the $s$-$t$ hyperpath enumeration for a $B$-hypergraph can be solved in polynomial delay by backtracking.
\end{abstract}
\newpage
\section{Introduction}
Enumeration
 problems on graphs have been widely studied, and in particular, 
the complexity of enumerating minimal subsets that satisfy connectivity constraints has been widely studied, i.e., (induced) $s$-$t$ paths~\cite{Uno:DS:2014,ReadTarjan}, minimal $s$-$t$ cuts and multiway cuts~\cite{10.1145/3517804.3524148,DBLP:conf/mfcs/KuritaK20}, minimal $s$-$t$ separators~\cite{Takata:DAM:2010,DBLP:journals/ijfcs/BerryBC00}, minimal strongly connected spanning subgraphs~\cite{DBLP:journals/algorithmica/KhachiyanBEG08}, and minimal Steiner trees and its variants~\cite{KIMELFELD2008335,10.1145/3517804.3524148}.
When evaluating the complexity of enumeration algorithms, 
we sometimes use an \emph{output-sensitive} manner~\cite{JOHNSON1988119}.
An enumeration algorithm is called an \emph{output-polynomial time} algorithm if 
the total running time is bounded by $\order{\mathrm{poly}(n + N)}$ time, where $n$ is the size of an input and $N$ is the number of outputs.
In enumeration algorithms, \emph{delay} is used as an efficiency measure in addition to the total computation time.
The delay is defined as the maximum time until the first solution is output, the time between outputting the $i$-th and $i+1$-th solutions, and the time between outputting the last solution and termination.
If the delay is bounded by $\order{\mathrm{poly}(n)}$, we call it a \emph{polynomial-delay algorithm}.

As a generalization of directed graphs, \emph{directed hypergraphs} are known~\cite{AUSIELLO2017293,GALLO1993177}.
In a directed graph, an arc consists of a single tail and a single head.
In contrast, in a directed hypergraph, an hyperarc is defined by a disjoint pair of sets of vertices.
One set of vertices represents the \emph{tails} and the other set of vertices represents the \emph{heads}.
In particular, a directed hypergraph in which the head of every hyperarc has cardinality one is called a \emph{$B$-hypergraph},
and a directed hypergraph in which either the head or the tail of every hyperarc has cardinality one is called a \emph{$BF$-hypergraph}.
In addition, as a generalization of reachability in directed graphs, the concept of $B$-connectivity is defined for directed hypergraphs.
$B$-connectivity relates to forward chaining, namely, a vertex $t$ is $B$-connected from a vertex $s$ if and only if the transitive closure/forward chaining of $s$ reaches $t$.

A $B$-connectivity on a directed hypergraph is a mathematical model that can represent a broader range of problems than a connectivity on a directed graph.  
It is used to model various applications such as chemical reaction networks~\cite{doi:10.1089/cmb.2023.0242,OZTURAN2008881}, the formulation of the maximum Horn \SAT{} problem~\cite{Gallo1998,GALLO1993177}, transportation networks~\cite{VOLPENTESTA2008390,Gallo1998_model}, and conflict-free Petri nets~\cite{ALIMONTI2011320,10.1007/3-540-56402-0_55,10.1007/BFb0023812}.  
By \cite{10.1007/BFb0023812}, there is connection between reachability in conflict-free petri net and paths on $B$-hypergraphs\footnote{The objects called hypergraphs in \cite{10.1007/BFb0023812}, are called $B$-hypergraphs in this paper.}.

While directed hypergraphs can model a wider variety of phenomena than directed graphs, optimization problems on directed hypergraphs become \NP-complete, even for problems that can be solved in polynomial time on directed graphs.
Examples of problems that become \NP-complete on directed hypergraphs include finding a directed cycle~\cite{OZTURAN2008881},~\footnote{Note that there are multiple definitions of a directed cycle in a directed hypergraph. The definition used in the paper~\cite{OZTURAN2008881} is given in Section 2.}
a minimum $s$-$t$ cut~\cite{Gallo1998}, the strongly connected components~\cite{Allamigeon2014}, and a shortest path~\cite{AUSIELLO2017293,10.1007/978-3-642-32147-4_1}. 
See the following survey for more details~\cite{AUSIELLO2017293}.

Given these facts, it is evident that when extending many optimization problems to directed hypergraphs, they tend to be intractable.
We aim to investigate how the complexity of fundamental enumeration problems changes when extending from directed graphs to directed hypergraphs.

Enumeration of (induced) $s$-$t$ path, minimal $s$-$t$ cuts, and minimal separators in a (directed) graph are fundamental problems in the field of enumeration algorithms, and many theoretically efficient enumeration algorithms have been developed~\cite{ReadTarjan,Provan:Algorithmca:1996,Birmele:Ferreira:SODA:2013,Takata:DAM:2010,Tsukiyama1980,Uno:DS:2014}.
It is well known that these problems can be enumerated with polynomial delay using a simple backtracking approach (also called the binary partition and the flashlight approach).

The goal of this paper is to show that enumerating induced $s$-$t$ hyperpaths and minimal $s$-$t$ separators in directed hypergraphs cannot be solved in  output-polynomial time 
unless $\P = \NP$.
Furthermore, we prove that
if there is an output-polynomial time algorithm for the $s$-$t$ hyperpath enumeration for $BF$-hypergraphs,
then the minimal transversal can be solved in output-polynomial time.
The existence of an output-polynomial time algorithm for minimal transversal enumeration has remained an open problem for over 45 years and
is one of the most famous open problems in the field of enumeration algorithms~\cite{EITER20082035}.
Finally, as a positive result, we show that when directed hypergraphs are restricted to $B$-hypergraphs, 
$s$-$t$ hyperpaths can be enumerated with polynomial delay using a backtracking approach.
Our algorithm solves a slightly more general problem than the $s$-$t$ hyperpath enumeration and 
it can be applied to minimal directed Steiner tree enumeration.
Moreover, this positive result shows that we can enumerate all minimal unsatisfiable subformula of a Horn \SAT{} formula in polynomial delay. 


\section{Preliminaries}
Let $V$ be a set of elements.
We define $\mathcal A$ as a set of a pair of disjoint subsets of $2^V$.
We call $\mathcal D = (V, \mathcal A)$ a \emph{directed hypergraph}.
An element of $V$ and $\mathcal A$ is called a \emph{vertex} and a \emph{hyperarc} of $\mathcal D$, respectively.
We denote the set of vertices and hyperarcs in $\mathcal D$ as $V(\mathcal D)$ and $\mathcal A(\mathcal D)$.
For a hyperarc $A = (T, H)$,
$T$ and $H$ are called the \emph{tails} of $A$ and \emph{heads} of $A$, respectively.
It is denoted by $T(A)$ and $H(A)$, respectively.
A hyperarc $A$ is called a \emph{$B$-hyperarc} (a \emph{$F$-hyperarc}) if $\size{H(A)} = 1$ ($\size{T(A)} = 1$).
Notice that $B$ and $F$ are abbreviations of ``backward'' and ``forward'', respectively.
A directed hypergraph $\mathcal D = (V, \mathcal A)$ is called a \emph{$B$-hypergraph} (a \emph{$BF$-hypergraph}) if any hyperarc in $\mathcal A$ is a $B$-hyperarc (a $B$-hyperarc or a $F$-hyperarc).
For a directed hypergraph $\mathcal D = (V, \mathcal A)$, 
a directed hypergraph $\mathcal F = (U, \mathcal B)$ is a \emph{subhypergraph of $\mathcal D$} if $U \subseteq V$ and $\mathcal B \subseteq \mathcal A$.
For a set of hyperarcs $\mathcal B \subseteq \mathcal A$, 
a directed hypergraph $\mathcal F = (U, \mathcal B)$ is 
an \emph{edge-induced subhypergraph of $\mathcal D$} induced by $\mathcal B$ if $U = \bigcup_{(T, H) \in \mathcal B}(T \cup H)$.
We denote it as $\mathcal D[\mathcal B]$.
A directed hypergraph $\mathcal F = (U, \mathcal B)$ is an \emph{induced subhypergraph of $\mathcal D$} induced by $U$ if $U \subseteq V$ and $B = \inset{A \in \mathcal A}{H(A) \subseteq U, T(A) \subseteq U}$.
We denote it as $\mathcal D[U]$.

In the theory of directed hypergraphs, 
there exist multiple definitions of connectivity and reachability.
We adopt $B$-connectivity as the notion of connectivity~\cite{THAKUR20092592,AUSIELLO2017293}.~\footnote{Gallo et al.~\cite{GALLO1993177} provides a characterization for making a vertex $t$ $B$-connected from a vertex $s$, but Nielsen et al. points out that it is incorrect~\cite{nielsen2001remark}.}
The \emph{$B$-connection} from a vertex $s$ is defined as follows: 
(i)  $s$ is $B$-connected from $s$, and 
(ii) if there is a hyperarc $A$ such that all the vertices in $T(A)$ are $B$-connected from $s$, then $H(A)$ is $B$-connected from $s$.
A \emph{$B$-hyperpath} from $s$ to a vertex $t$ is an inclusion-wise minimal subhypergraph such that $t$ is $B$-connected from $s$.
From the minimality of $B$-hyperpath, the set of vertices in an $s$-$t$ hyperpath
is uniquely defined by a set of hyperarcs.
Thus, we refer to a set of hyperarcs satisfying the condition as an $s$-$t$ hyperpath hereafter.
A set of hyperarcs $P_{st}$ is an $s$-$t$ hyperpath if and only if there is a sequence of hyperarcs $(A_1, \ldots, A_k)$ such that 
(i)   for each $A_i$, $T(A_i) \subseteq \set{s} \cup \bigcup_{1 \le j \le i-1}H(A_j)$,
(ii)  $t \in H(A_{k})$, and
(iii) for any proper subhypergraph of $P_{st}$, $t$ is not $B$-connected from $s$. See Section~2 in \cite{AUSIELLO200527} for more details.
We give an example of an $s$-$t$ hyperpath in \Cref{fig:example}.
A set of vertices $P_{st}$ is an \emph{induced $s$-$t$ hyperpath} if $t$ is $B$-connected from $s$ in $\mathcal D[P_{st}]$.
A set of vertices $X$ of $\mathcal D = (V, \mathcal A)$ is an \emph{$s$-$t$ separator} if $X$ does not contain $s$ and $t$ and
$t$ is not $B$-connected from $s$ in $D[V \setminus X]$.
An $s$-$t$ separator $X$ is \emph{minimal} if any proper subset of $X$ is not an $s$-$t$ separator.

\begin{figure}
    \centering
    \includegraphics[width=0.6\linewidth]{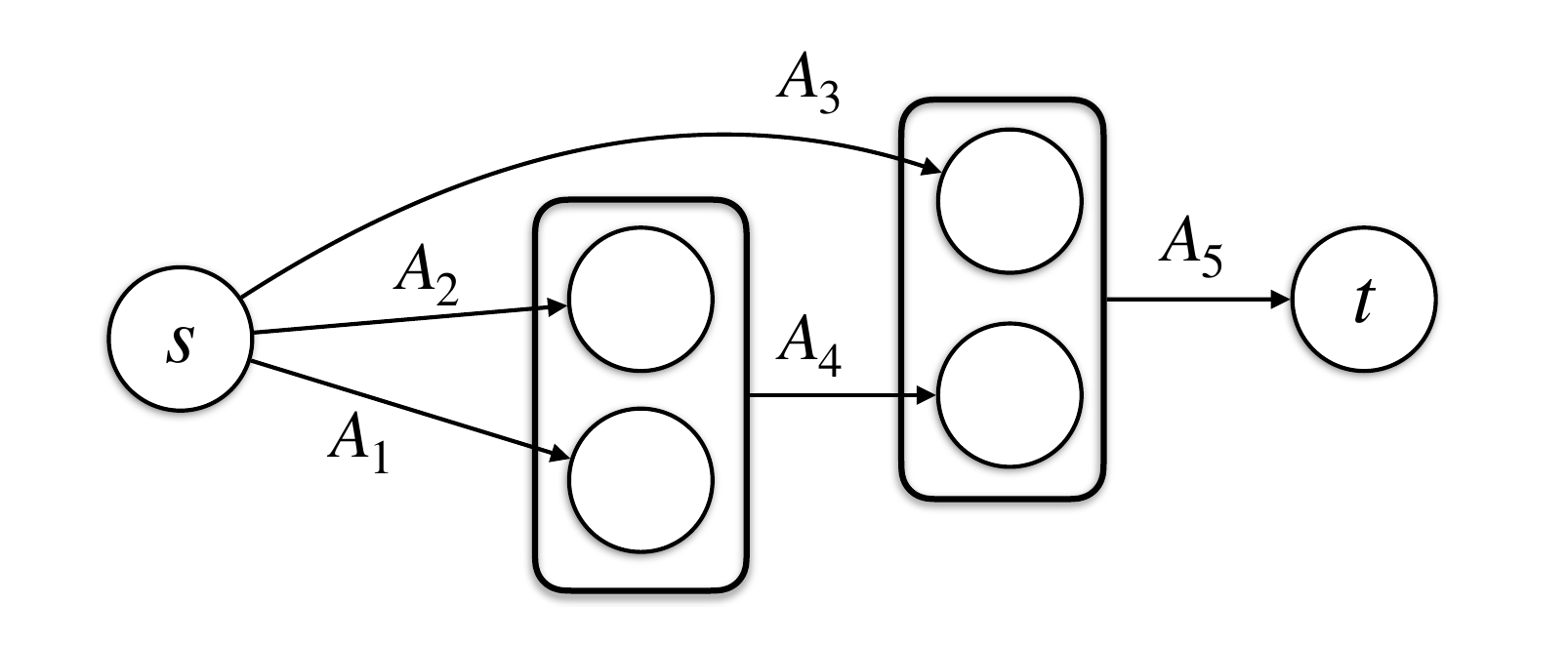}
    \caption{An example of a directed hypergraph and an $s$-$t$ hyperpath. If the cardinality of tail is more than one, we represent tails using a rectangle. It has an $s$-$t$ hyperpath with an order representation $(A_1, A_2, A_3, A_4, A_5)$.}
    \label{fig:example}
\end{figure}

In this paper, we deal with the following enumeration problems.
\begin{problem}[\textsc{(Induced) $s$-$t$ hyperpath Enumeration}]
    Let $\mathcal D=(V,\mathcal{A})$ be a directed hypergraph and $s$ and $t$ be vertices in $V$.
    The task is to enumerate all (induced) $s$-$t$ hyperpaths in $\mathcal D$.
\end{problem}

\begin{problem}[\textsc{Minimal $s$-$t$ Separator Enumeration}]
    Let $\mathcal D=(V,\mathcal{A})$ be a directed hypergraph and $s$ and $t$ be vertices in $V$.
    The task is to enumerate all minimal $s$-$t$ separators in $\mathcal D$.
\end{problem}

\section{Hardness Results}
We show three hardnesses in this section.
In \Cref{subsec:AIP} and \Cref{subsec:AMS}, 
we show that there are no output-polynomial time algorithms for enumerating all induced $s$-$t$ hyperpaths and minimal $s$-$t$ separators on $B$-hypergraphs , respectively, unless $\P = \NP$.
In \Cref{subsec:bpath:bf}, we show that 
\textsc{$s$-$t$ hyperpath Enumeration} for $BF$-hypergraphs is at least as hard as \textsc{Minimal Transversal Enumeration}.
More formally, if we have an output-polynomial time algorithm for \textsc{$s$-$t$ hyperpath Enumeration},
then minimal transversal enumeration can be solved in output polynomial time.

To show the hardness results in \Cref{subsec:AIP} and \Cref{subsec:AMS}, 
we show the \NP-hardness of \emph{another solution problem} (also called finished decision problem~\cite{https://doi.org/10.1002/mcda.1603}).
More formally, we consider the following problems.

\begin{problem}[\textsc{Another Induced $s$-$t$ hyperpath}]
    Let $\mathcal D$ be a directed hypergraph, $s$ and $t$ be vertices, and
    $\mathcal P_{st}$ be a set of induced $s$-$t$ hyperpaths.
    The task is to determine whether there is an induced $s$-$t$ hyperpath not contained in $\mathcal P_{st}$.
\end{problem}

\begin{problem}[\textsc{Another Minimal $s$-$t$ Separator}]
    Let $\mathcal D$ be a directed hypergraph, $s$ and $t$ be vertices, and
    $\mathcal X$ be a set of minimal $s$-$t$ separators.
    The task is to determine whether there is a minimal $s$-$t$ separator not contained in $\mathcal X$.
\end{problem}

If we have an output-polynomial time algorithm for these enumeration problems,
these another solution problems can be solved in polynomial time.
Such a proof technique has been used as folklore~\cite{JOHNSON1988119,DBLP:journals/ipl/BrosseDKLUW24,DBLP:journals/dam/BorosM24}.
For the sake of completeness, we explain why, if these another problems are \NP-complete, 
their corresponding enumeration problems cannot be solved in output-polynomial time unless $\P = \NP$.

Suppose that there is an output-polynomial time algorithm $\mathcal A$ for \textsc{Induced $s$-$t$ hyperpath Enumeration}.
We run the algorithm $\mathcal A$ at most $(\size{V(G)} + \size{\mathcal P_{st}})^c $ steps, where $c$ is some constant. 
If there is no induced $s$-$t$ hyperpaths that are not contained in $\mathcal P_{st}$, 
$\mathcal A$ terminates outputting $\mathcal P_{st}$.
Otherwise, $\mathcal A$ does not terminate.
In other words, if $\mathcal A$ does not terminate, 
it can be concluded that there exists an induced $s$-$t$ hyperpath not included in $\mathcal P_{st}$.
Therefore, if these another solution problems are \NP-hard, 
there is no output-polynomial time enumeration algorithm unless $\P = \NP$.
Note that this algorithm can answer only ``Yes'' or ``No'', but it cannot find an induced $s$-$t$ hyperpath not contained in $\mathcal P_{st}$.
However, since the aim of a decision problem is to answer ``Yes'' or ``No,''
this algorithm solves these another solution problems.

In \Cref{subsec:bpath:bf}, we show that $s$-$t$ hyperpath enumeration is at least as hard as \textsc{Minimal Transversal Enumeration}.
More formally, we show that for a hypergraph $\mathcal H = (V, \mathcal E)$, 
there is a bijection between $\mathrm{Tr}(\mathcal H)$ and the set of all $s$-$t$ hyperpaths in 
a directed hypergraph $\mathcal D(\mathcal H)$, where $\mathrm{Tr}(\mathcal H)$ is the set of minimal transversals in $\mathcal H$ and $\mathcal D(\mathcal H)$ is a directed hypergraph with $\order{\size{\mathcal E}}$ vertices and $\order{\size{\mathcal E}}$ edges.
A set of vertices $U$ is called a \emph{transversal} if for any $E \in \mathcal E$, 
$E\cap U \neq \emptyset$.
If for any $U' \subset U$, $U'$ is not a transversal, $U$ is a \emph{minimal transversal}.
\textsc{Minimal Transversal Enumeration} can be solved in output-quasi polynomial time~\cite{FREDMAN1996618}.
On the other hand, output-polynomial time enumeration of minimal transversals is a longstanding open problem in the field of enumeration algorithms~\cite{EITER20082035}.
Thus, our ``hardness'' result provides evidence that solving \textsc{$s$-$t$ hyperpath Enumeration} in output-polynomial time is not easy.

\subsection{Another Induced \texorpdfstring{$s$-$t$}{s-t} hyperpath}\label{subsec:AIP}
We give a polynomial-time reduction from 3-\SAT{}.
Let $\phi = C_1 \land \ldots \land C_m$ be a 3-CNF formula.
We denote the set of variables in $\phi$ as $V(\phi)$.
We construct a directed hypergraph $\mathcal D = (V, \mathcal A)$ from $\phi$ as follows.
We add two vertices $s$ and $t$ to $V$.
For each variable $x_i \in V(\phi)$, we add two vertices $x_i, \bar x_i$ to $V$.
For each clause $C_j$, we add a vertex $c_j$ to $V$.
We next define the set of hyperarcs.
For each $x_i$ and $\bar x_i$, we add three hyperarcs $(\set{x_i, \bar x_i}, \set{t})$, $(\set{s}, \set{x_i})$, and $(\set{s}, \set{\bar x_i})$.
For each clause $C_j = \ell^j_1 \lor \ell^j_2 \lor \ell^j_3$, we add three hyperarcs $(\set{\ell^j_1}, \set{c_j})$, $(\set{\ell^j_2}, \set{c_j})$, and $(\set{ \ell^j_3}, \set{c_j})$.
Finally, we add a hyperarc $(\set{c_1, \ldots, c_m}, \set{t})$.
We denote the resultant directed hypergraph as $\mathcal D_\phi$, and $\mathcal D_\phi$ is a $B$-hypergraph.
Notice that this reduction can be done in $\order{\mathrm{poly}(n + m)}$ time, where $n$ is the number of variables in $\phi$.

\begin{figure}[t]
    \centering
    \includegraphics[width=0.9\linewidth]{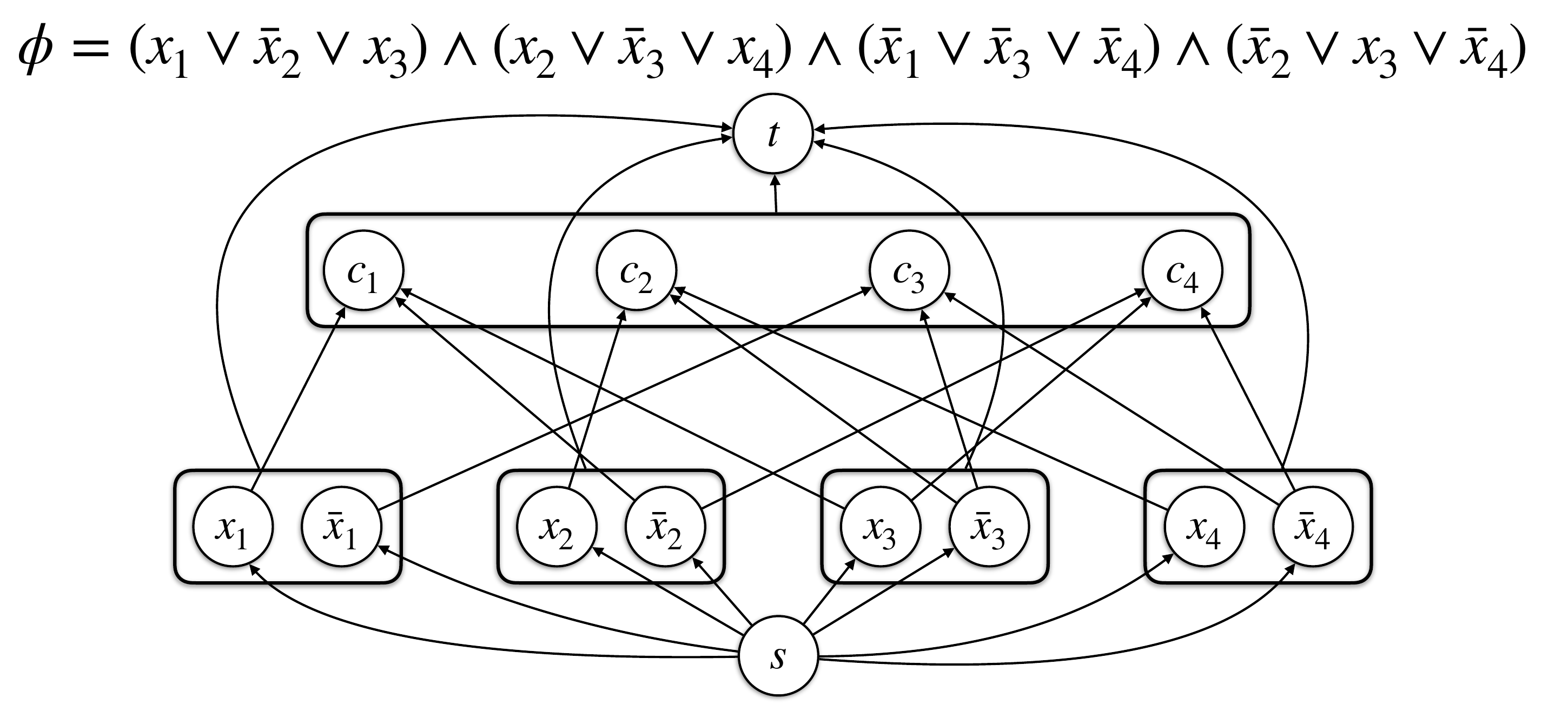}
    \caption{A reduction from 3-\SAT{} to \textsc{Another Induced $s$-$t$ hyperpath}. A directed hypergraph $\mathcal D_\phi$ has an induced $s$-$t$ hyperpath without 
    $\set{s, x_1, \bar x_1, t}$, 
    $\set{s, x_2, \bar x_2, t}$, 
    $\set{s, x_3, \bar x_3, t}$, and
    $\set{s, x_4, \bar x_4, t}$, if and only if $\phi$ is satisfiable.
    In this case, since
    $x_1 = 1$,
    $x_2 = 1$,
    $x_3 = 0$, and
    $x_4 = 0$ is a satisfying assignment, 
    $\mathcal D_\phi$ has an induced $s$-$t$ hyperpath $\set{s, x_1, x_2, \bar x_4, c_1, c_2, c_3, c_4, t}$.
    If we assign $x_1 = 1$, $x_2 = 1$, and $x_4 = 0$, then $\phi$ is satisfied regardless of the assignment to $x_3$. 
    That is, $x_3$ becomes a ``don't care'' variable, and therefore neither $x_3$ nor $\bar{x}_3$ is included in the hyperpath.
    }
    \label{fig:reduction-AIP}
\end{figure}

The resultant directed hypergraph $\mathcal D_\phi$ has an induced $s$-$t$ hyperpath with four vertices.
For each $1 \le i \le n$, $\set{s, x_i, \bar x_i, t}$
is an induced $s$-$t$ hyperpath.
We denote the set of such induced $s$-$t$ hyperpaths as $\mathcal P_\phi$.
\Cref{fig:reduction-AIP} gives an example of our reduction. 
We show that $\mathcal D_\phi$ has an induced $s$-$t$ hyperpath that is not included in $\mathcal P_{\phi}$ if and only if $\phi$ has a satisfying assignment.

\newcommand{\revised}[1]{\textcolor{red}{#1}}
\begin{lemma}\label{lem:reduction:AIP}
    For a 3-CNF formula $\phi$, 
    $\phi$ has a satisfying assignment if and only if 
    $\mathcal D_\phi$ has an induced $s$-$t$ hyperpath not contained in $\mathcal P_\phi$.
\end{lemma}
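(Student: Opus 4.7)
My plan is to prove the two directions separately, exploiting the three-layer structure of $\mathcal D_\phi$ (the root $s$, the literals $x_i/\bar x_i$, the clause vertices $c_j$, then $t$) and the role of the ``shortcut'' hyperarcs $(\set{x_i,\bar x_i},\set{t})$, which are precisely what $\mathcal P_\phi$ records.

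For the forward direction ($\Rightarrow$), given a satisfying assignment $\alpha$ of $\phi$, I will define
\[P \coloneqq \set{s, t} \cup \set{c_1, \ldots, c_m} \cup \inset{x_i}{\alpha(x_i) = 1} \cup \inset{\bar x_i}{\alpha(x_i) = 0}\]
and verify $B$-connectivity in $\mathcal D_\phi[P]$ layer by layer: each chosen literal is reached from $s$ via its arc $(\set{s},\set{\cdot})$; each $c_j$ is reached because $\alpha$ satisfies $C_j$, so some literal $\ell^j_k$ of $C_j$ lies in $P$ and $(\set{\ell^j_k},\set{c_j})$ is present in $\mathcal D_\phi[P]$; finally $t$ is produced by $(\set{c_1,\ldots,c_m},\set{t})$. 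I then pass to any inclusion-minimal $P^* \subseteq P$ containing $s,t$ with $t$ still $B$-connected from $s$ in $\mathcal D_\phi[P^*]$; this $P^*$ is an induced $s$-$t$ hyperpath. Since $P$, and hence $P^*$, contains at most one of $\set{x_i, \bar x_i}$ per variable, no hyperarc $(\set{x_i, \bar x_i}, \set{t})$ is available in $\mathcal D_\phi[P^*]$, so $t$ can only be produced through $(\set{c_1,\ldots,c_m},\set{t})$, which forces every $c_j$ into $P^*$. In particular $P^* \neq \set{s, x_i, \bar x_i, t}$ for any $i$, so $P^* \notin \mathcal P_\phi$.

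For the backward direction ($\Leftarrow$), let $P \notin \mathcal P_\phi$ be an induced $s$-$t$ hyperpath of $\mathcal D_\phi$. The crucial claim is that $P$ contains at most one of $\set{x_i, \bar x_i}$ for every variable $x_i$. Otherwise, $\set{s, x_i, \bar x_i, t}$ already witnesses $B$-connectivity from $s$ to $t$ in $\mathcal D_\phi[P]$ through the two arcs from $s$ together with $(\set{x_i, \bar x_i}, \set{t})$; inclusion-minimality of $P$ as an induced hyperpath then forces $P = \set{s, x_i, \bar x_i, t} \in \mathcal P_\phi$, contradicting the hypothesis. Given the claim, the only hyperarc in $\mathcal D_\phi[P]$ with head $t$ is $(\set{c_1,\ldots,c_m},\set{t})$, so every $c_j$ lies in $P$ and, in turn, is $B$-connected from $s$ only through some literal $\ell^j_k \in P$ of the clause $C_j$. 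Setting $\alpha(x_i) \coloneqq 1$ if $x_i \in P$, $\alpha(x_i) \coloneqq 0$ if $\bar x_i \in P$, and arbitrarily otherwise yields a well-defined truth assignment that satisfies every clause of $\phi$.

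The main obstacle I anticipate is pinning down the implicit minimality of an induced $s$-$t$ hyperpath and using it to preclude a non-trivial $P$ from simultaneously containing both $x_i$ and $\bar x_i$. Once that reduction step is secured, the remainder of both directions becomes routine reachability bookkeeping through the literal and clause layers.
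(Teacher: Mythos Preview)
Your proposal is correct and follows essentially the same approach as the paper's own proof: both directions proceed by the same layer-by-layer reachability analysis, and the key step in the backward direction---using inclusion-minimality to rule out $\set{x_i,\bar x_i}\subseteq P$ whenever $P\notin\mathcal P_\phi$---is identical. If anything, you are slightly more careful than the paper in the forward direction, explicitly passing to a minimal $P^*\subseteq P$ and verifying $P^*\notin\mathcal P_\phi$, whereas the paper leaves this implicit.
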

\begin{proof}
 Suppose that $\phi$ has a satisfying assignment $\alpha: V_\phi \to \set{0, 1}$.
    We define a set of vertices in $\mathcal D_\phi$ using $\alpha$.
    We define $U = V(\mathcal D_\phi)$.
    If $\alpha(x_i) = 0$, we remove a vertex $x_i$ from $U$, otherwise
    if $\alpha(x_i) = 1$, we remove a vertex $\bar x_i$ from $U$.
    We show that there is an induced $s$-$t$ hyperpath contained in $U$.    
    Since each clause $C_i = \ell^i_1 \lor \ell^i_2 \lor \ell^i_3$ is satisfied by $\alpha$, 
    at least one literal $\ell^i_j$ becomes $1$.
    Therefore, each $c_i$ is $B$-connected from $s$ in $\mathcal D_\phi[U]$.
    Since $\mathcal D_\phi[U]$ has a hyperarc $(\set{c_1, \ldots, c_m}, \set{t})$,
    $t$ is $B$-connected from $s$ in $\mathcal D_\phi[U]$ and there is an induced $s$-$t$ hyperpath contained in $U$.
    
    We show the opposite direction.
    Let $P$ be an induced $s$-$t$ hyperpath that is not contained in $\mathcal P_\phi$.
    For each pair $x_i$ and $\bar x_i$, 
    $x_i \notin P $ or $ \bar{x}_i \notin P$.
        If $P$ contains both $x_i$ and $\bar x_i$, it contradicts the minimality since 
        $t$ is $B$-connected from $s$ in $\mathcal D_{\phi}[\set{s, t, x_i, \bar x_i}]$.      
    If there is a vertex $c_j \not\in P$, 
    $\mathcal D_\phi[P]$ does not have a hyperarc $A$ that contains $t$ as the head.
    Therefore, it contradicts that $P$ is an induced $s$-$t$ hyperpath.
    From the above discussion, 
    $\set{c_1, \ldots, c_m} \subseteq P$ and 
    $x_i \notin P$ or $\bar x_i \notin P$ for each $1\le i \le n$.
    We define an assignment $\alpha_P$ as follows: 
    $\alpha_P(x_i) = 1$ if $x_i \in P$ and
    $\alpha_P(x_i) = 0$ if $x_i \not\in P$.
    We show that $\alpha_P$ satisfies $\phi$.
    For each clause $C_i = \ell^i_1 \lor \ell^i_2 \lor \ell^i_3$, 
    at least one literal $\ell^i_j$ becomes $1$ by $\alpha_P$.
    Otherwise, it contradicts that either $P$ contains $\set{c_1, \ldots, c_m}$ or $P$ is an induced $s$-$t$ hyperpath.
    Therefore, $\alpha$ is a satisfying assignment of $\phi$.
\end{proof}

\Cref{lem:reduction:AIP} implies there is no output-polynomial time algorithm for \textsc{Induced $s$-$t$ hyperpath Enumeration} unless $\P = \NP$.
By making minor modifications to our reduction, 
it is possible to show a stronger hardness result.
\textsc{Another induced $s$-$t$ hyperpath} is still \NP-hard even if the cardinality of a tail is at most two.
Intuitively, since $\mathcal D$ contains only one hyperarc with a large tail,
we can satisfy this condition by ``decomposing'' that hyperarc into multiple hyperarcs, each having a tail of cardinality two, in a structure resembling a binary tree.

Let $A$ be the hyperarc  $(\set{c_1, \ldots, c_m}, \set{t}) \in \mathcal A(\mathcal D_\phi)$.
To simplify the following discussion, we assume that $m$ is a power of two.
Notice that if $m$ is not a power of two, we add one dummy variable $z$ and dummy clauses $(z \lor z \lor z)$ to make $m$ a power of two.
We add vertices $d_2, \ldots, d_{m-1}$ to $\mathcal D$, and 
replace a hyperarc $(\set{c_1, \ldots, c_m}, \set{t})$ with hyperarcs
$\inset{(\set{w_{2i}, w_{2i+1}}, \set{w_{i}})}{1 \le i \le m-1}$, where we define $w_i$ as follows.
We define $w_1 = t$, 
if $i$ is at least $m$ and at most $2m-1$, $w_i = c_{i - m + 1}$, and otherwise $w_i = d_{i}$.
We denote the resultant graph as $\mathcal D'_\phi$.
Notice that the number of dummy clauses is at most $m$, this reduction works in polynomial time.
Any induced $s$-$t$ hyperpath contains all vertices in $\mathcal D'$, 
there is a bijection between the set of all induced $s$-$t$ hyperpaths in $\mathcal D_\phi$ and $\mathcal D'_\phi$.
Therefore, \Cref{lem:reduction:AIP} also holds even for $\mathcal D'_\phi$ and  the following theorem and corollary hold.

\begin{theorem}
    \textsc{Another Induced $s$-$t$ hyperpath} on a $B$-hypergraph $\mathcal D = (V, \mathcal A)$ is \NP-complete even if 
    $\size{\mathcal P_{st}} \le \size{V}$ and $\size{T(A)} \le 2$ for any $A \in \mathcal A$.
    Therefore, 
    there are no output-polynomial time algorithms for enumerating all induced $s$-$t$ hyperpaths in a $B$-hypergraph unless $\P = \NP$ even if $\size{T(A)} \le 2$ for any $A \in \mathcal A$.    
\end{theorem}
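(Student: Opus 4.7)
My plan is to verify the three additional ingredients beyond \Cref{lem:reduction:AIP} that are needed to upgrade it to the full theorem: the tail-size bound on $\mathcal D'_\phi$, the size bound on $\mathcal P_{st}$, and membership in \NP, and then to invoke the folklore another-solution argument already described before \Cref{subsec:AIP}.

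First, I would check that the binary-tree replacement sketched in the paragraph before the theorem preserves the characterization of \Cref{lem:reduction:AIP}. Concretely, in $\mathcal D'_\phi$ the only hyperarc of $\mathcal D_\phi$ with a large tail has been replaced by the family $\{(\{w_{2i},w_{2i+1}\},\{w_i\}) : 1 \le i \le m-1\}$, and all the remaining hyperarcs of $\mathcal D_\phi$ already have tails of size at most two (in fact of size one or two). So $\mathcal D'_\phi$ is a $B$-hypergraph with $\size{T(A)} \le 2$ for every hyperarc $A$. Since the dummy vertices $d_2,\dots,d_{m-1}$ each appear as the unique head of exactly one replacement hyperarc and as a tail of exactly one other, a vertex $w_i$ becomes $B$-connected from $s$ in the replacement gadget if and only if every $c_j$ in the subtree rooted at $w_i$ is $B$-connected; in particular $t = w_1$ is $B$-connected exactly when all $c_1,\dots,c_m$ are. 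Hence induced $s$-$t$ hyperpaths of $\mathcal D'_\phi$ are in bijection with those of $\mathcal D_\phi$: the ``satisfying'' hyperpaths extend uniquely by adjoining the full set $\{d_2,\dots,d_{m-1}\}$, and the degenerate ones $\{s,x_i,\bar x_i,t\}$ from $\mathcal P_\phi$ lift to themselves (the replacement gadget contributes nothing because $t$ is already reached via $(\{x_i,\bar x_i\},\{t\})$). Applying \Cref{lem:reduction:AIP} on this bijection yields \NP-hardness of \textsc{Another Induced $s$-$t$ hyperpath} even under the tail-size restriction.

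Next I would verify the bound on the certificate set. On the $\mathcal D_\phi$ side, $\mathcal P_\phi$ contains exactly $n$ hyperpaths (one per variable), and the lifts in $\mathcal D'_\phi$ are still $n$; meanwhile $\size{V(\mathcal D'_\phi)} = 2 + 2n + m + (m-1) \ge n$, so the inequality $\size{\mathcal P_{st}} \le \size{V}$ is automatic. Each element of $\mathcal P_{st}$ is explicitly constructible in polynomial time from $\phi$, so the whole instance of \textsc{Another Induced $s$-$t$ hyperpath} is built in polynomial time. For membership in \NP, I would just note that an induced $s$-$t$ hyperpath is certified by a subset $U \subseteq V$ of polynomial size together with an ordering $(A_1,\dots,A_k)$ of the hyperarcs of $\mathcal D[U]$ witnessing $B$-connectivity of $t$ from $s$; the defining conditions (tail containment, minimality of $U$, and $U \notin \mathcal P_{st}$) can all be checked in polynomial time.

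Finally, the ``Therefore'' clause of the theorem is exactly the general implication spelled out in the paragraph following the definitions of the another-solution problems: if an output-polynomial time enumeration algorithm existed, we could run it for a polynomial number of steps on $(\mathcal D'_\phi, s, t)$ and decide whether an induced $s$-$t$ hyperpath outside $\mathcal P_{st}$ exists, contradicting $\P \ne \NP$. I expect the only mildly delicate point in the whole argument to be confirming the minimality condition carefully for the lifted hyperpaths in $\mathcal D'_\phi$, namely checking that the entire chain of replacement hyperarcs is forced once any $c_j$ is used, so that no proper subhypergraph of a satisfying hyperpath still reaches $t$; this follows from the fact that each $w_i$ with $i \ge 2$ has a unique in-coming replacement hyperarc in the binary tree.
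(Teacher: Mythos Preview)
Your proposal is correct and follows essentially the same approach as the paper: both rely on \Cref{lem:reduction:AIP} together with the binary-tree replacement of the single large-tail hyperarc, and both argue that the induced $s$-$t$ hyperpaths of $\mathcal D'_\phi$ are in bijection with those of $\mathcal D_\phi$ (the paths in $\mathcal P_\phi$ lifting to themselves, the remaining ones acquiring all the $d_i$). Your write-up is in fact more careful than the paper's sketch, since you explicitly verify \NP{} membership, the bound $\size{\mathcal P_{st}}\le\size{V}$, and the minimality of the lifted hyperpaths, all of which the paper leaves implicit.
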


\subsection{Another Minimal \texorpdfstring{$s$-$t$}{st} Separator}\label{subsec:AMS}
We give a reduction from 3-\SAT{}.
This reduction is similar to the reduction in the previous subsection.
Let $\phi = C_1 \land \ldots \land C_m$ be a $3$-CNF formula.
We construct $\mathcal D_\phi = (V_\phi, \mathcal A_\phi)$ as follows.
We add vertices $s$, $t$, and $c$ to $V_\phi$.
For each variable $x_i \in V(\phi)$, 
we add three vertices $x_i, \bar x_i, y_i$ to $V_\phi$.

We next define $\mathcal A_\phi$.
For each $x_i \in V(\phi)$, 
we add 
$(\set{x_i}, \set{y_i})$, 
$(\set{\bar x_i}, \set{y_i})$, 
$(\set{s}, \set{x_i})$, 
and $(\set{s}, \set{\bar x_i})$.
For each clause $C_j = \ell^j_1 \lor \ell^j_2 \lor \ell^j_3$, 
we add a hyperarc $(\set{\bar \ell^j_1, \bar \ell^j_2, \bar \ell^j_3}, \set{c})$.
Finally, we add a hyperarc $(\set{c, y_1, \ldots, y_{n}}, \set{t})$, where $n$ is the number of variables in $\phi$.

We next define a set of minimal $s$-$t$ separators $\mathcal X(\phi)$.
As a set of vertices $\set{c}$ or $\set{y_j}$ for any $1 \le j \le n$ is a minimal $s$-$t$ separator,
we add such trivial minimal separators to $\mathcal X(\phi)$.
For each variable $x_i \in V(\phi)$, $\set{x_i, \bar x_i}$ is also a minimal $s$-$t$ separator.
We define a set of minimal $s$-$t$ separators $\mathcal X(\phi)$ as the above $2n + 1$ minimal $s$-$t$ separators.
\Cref{fig:reduction-AMS} gives an example of our reduction.


\begin{figure}[t]
    \centering
    \includegraphics[width=\linewidth]{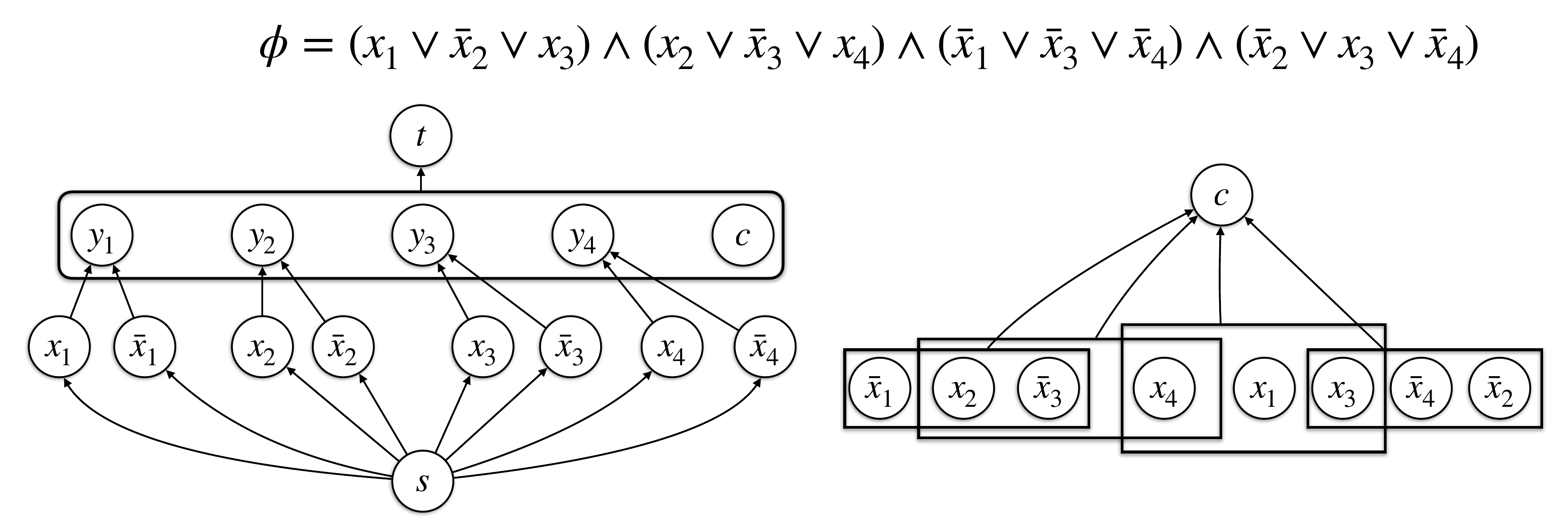}
    \caption{An example of our reduction from 3-\SAT{} to \textsc{Another Mimimal $s$-$t$ separator}. 
    The left figure shows the vertices and hyperarcs added for the variables of $\phi$, 
    while the right figure shows the hyperarcs added for the clauses.
    The entire directed hypergraph $\mathcal D_\phi$ is the union of the vertices and hyperarcs of these two directed hypergraphs.
    In this case, $\mathcal D_\phi$ has the following minimal $s$-$t$ separators:
    $\set{x_1, \bar x_1}$, 
    $\set{x_2, \bar x_2}$, 
    $\set{x_3, \bar x_3}$, 
    $\set{x_4, \bar x_4}$, 
    $\set{y_1}$, 
    $\set{y_2}$, 
    $\set{y_3}$, 
    $\set{y_4}$, and
    $\set{c}$. 
    Since $\phi$ has a satisfying assignment 
    $x_1 = 1$,
    $x_2 = 1$,
    $x_3 = 0$, and
    $x_4 = 0$,
    $\mathcal D_\phi$ has a minimal $s$-$t$ separator $\set{x_1, x_2, \bar x_4}$.
    If we assign $x_1 = 1$, $x_2 = 1$, and $x_4 = 0$, then $\phi$ is satisfied regardless of the assignment to $x_3$. 
    That is, $x_3$ becomes a ``don't care'' variable, and therefore neither $x_3$ nor $\bar{x}_3$ is included in the hyperpath.    
    }
    \label{fig:reduction-AMS}
\end{figure}

\begin{lemma}
    For a 3-CNF formula $\phi$,
    $\phi$ has a satisfying assignment if and only if
    $\mathcal D_\phi$ has a minimal $s$-$t$ separator not contained in $\mathcal X(\phi)$.
\end{lemma}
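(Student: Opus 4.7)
The plan is to follow the same template as the proof of \Cref{lem:reduction:AIP}: I would map each satisfying assignment $\alpha$ of $\phi$ to a literal-removal set $X_\alpha = \{x_i \mid \alpha(x_i)=0\} \cup \{\bar x_i \mid \alpha(x_i)=1\}$ and, conversely, read an assignment off any minimal separator that is not already listed in $\mathcal X(\phi)$. The structural fact I will use throughout is that $t$ has a single incoming hyperarc $(\{c, y_1, \ldots, y_n\}, \{t\})$, so an $s$-$t$ separator is exactly a vertex set that disconnects $c$ or some $y_i$ from $s$.

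For the forward direction, I would first verify that $X_\alpha$ is an $s$-$t$ separator. Every $y_i$ remains $B$-reachable because exactly one of $\{x_i, \bar x_i\}$ survives in $V \setminus X_\alpha$, but for each clause $C_j$ a true literal $\ell^j_k$ has its negation vertex in $X_\alpha$ (a literal vertex lies in $X_\alpha$ iff the corresponding literal is false under $\alpha$), so the hyperarc $(\{\bar\ell^j_1, \bar\ell^j_2, \bar\ell^j_3\}, \{c\})$ cannot fire, $c$ is unreachable, and hence so is $t$. Since $t$ is $B$-reachable from $s$ in the undeleted $\mathcal D_\phi$, $X_\alpha$ is nonempty and contains a minimal sub-separator $X^\star$. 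By construction $X_\alpha$ contains no $c$, no $y_j$, and never both of $\{x_i, \bar x_i\}$; these properties are closed under taking subsets, so $X^\star$ cannot coincide with $\{c\}$, any $\{y_j\}$, or any $\{x_i, \bar x_i\}$, and hence $X^\star \notin \mathcal X(\phi)$.

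For the converse, given a minimal separator $X \notin \mathcal X(\phi)$, I would eliminate forbidden vertex classes one at a time. If $c \in X$ or $y_j \in X$ for some $j$, then the singleton $\{c\}$ or $\{y_j\}$ is itself an $s$-$t$ separator contained in $X$, so minimality forces $X$ to equal that singleton, contradicting $X \notin \mathcal X(\phi)$. Thus $X \subseteq \{x_i, \bar x_i : 1 \le i \le n\}$. If both $x_i$ and $\bar x_i$ lay in $X$, then $\{x_i, \bar x_i\}$ would already separate $s$ from $t$ by killing $y_i$, and minimality would again give $X \in \mathcal X(\phi)$. Hence $X$ contains at most one literal vertex per variable, every $y_i$ survives in $V \setminus X$, and so $X$ must separate by disconnecting $c$, meaning $X$ meets the tail of every clause hyperarc. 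Defining $\alpha(x_i) = 0$ if $x_i \in X$, $\alpha(x_i) = 1$ if $\bar x_i \in X$, and arbitrarily otherwise yields a well-defined assignment, and the vertex of $X$ hitting each clause hyperarc corresponds precisely to a literal made true by $\alpha$, so $\alpha$ satisfies every $C_j$.

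The main obstacle is a small but delicate one: $X_\alpha$ itself need not be a minimal separator, since some literal vertex can be redundant when the remaining ones already break every clause hyperarc. The forward direction therefore must pass to a minimal sub-separator and lean on the observation that the three disqualifying properties of $X_\alpha$ — no $c$, no $y_j$, and no full pair $\{x_i, \bar x_i\}$ — are all preserved under taking subsets, which is exactly what is needed to certify the sub-separator lies outside $\mathcal X(\phi)$. Every other step reduces to case analysis driven by the unique incoming hyperarc of $t$.
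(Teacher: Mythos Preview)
Your proposal is correct and follows the same strategy as the paper's proof: map an assignment to a set of literal vertices, pass to a minimal sub-separator, and for the converse read an assignment off a minimal separator after using minimality to exclude $c$, every $y_j$, and every full pair $\{x_i,\bar x_i\}$. Your orientation $X_\alpha=\{x_i:\alpha(x_i)=0\}\cup\{\bar x_i:\alpha(x_i)=1\}$ (the \emph{false}-literal vertices) is in fact the correct one for this construction, whereas the paper's text places the true-literal vertices into $X$ and sets $\alpha_X(x_i)=1$ when $x_i\in X$, which is reversed; your version is the one that actually makes $c$ unreachable and recovers a satisfying assignment.
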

\begin{proof}
    Suppose that $\phi$ has a satisfying assignment $\alpha$.
    We define a set of vertices using $\alpha$.
    Let $x_i$ be a variable in $\phi$.
    The set of vertices $X$ contains a vertex $x_i$ if $\alpha(x_i) = 1$,
    otherwise $X$ contains $\bar x_i$.
    From the construction of $\mathcal D_\phi$,
    each $y_i$ is connected to $s$ in $\mathcal D_\phi[X]$.
    Moreover, since $\alpha$ is a satisfying assignment,
    $c$ is not connected to $s$ in $\mathcal D_\phi[X]$.
    Therefore, $X$ is an $s$-$t$ separator and 
    a minimal $s$-$t$ separator that is contained in $X$ is not contained in $\mathcal X(\phi)$.
    
    Suppose that $\mathcal D_\phi$ has a minimal $s$-$t$ separator $X \not\in \mathcal X(\phi)$.
    From the construction of $\mathcal X(\phi)$ and $X \not\in \mathcal X(\phi)$,
    for each $1 \le i \le \size{V(\phi)}$, $x_i \notin X$ or $\bar x_i \notin X$.
    We define an assignment $\alpha_X$ as follows.
    If $x_i \in X$, $\alpha_X(x_i) = 1$, otherwise $\alpha_X(x_i) = 0$.
    From the construction of $\mathcal D_\phi$, 
    each clause is satisfied. 
    Otherwise, $\mathcal D_\phi[V(\mathcal D_\phi) \setminus X]$ has an $s$-$t$ hyperpath and it contradicts that $X$ is an $s$-$t$ separator.
    Therefore, $\alpha_X$ is a satisfying assignment of $\phi$.
\end{proof}

We improve our reduction to the cardinality of each tail to two by making a small modification, 
In our reduction, the maximum cardinality of each tail is $n + 1$ since $\mathcal D_\phi$ has a hyperarc $A = (\set{y_1, \ldots, y_{n}, c}, \set{t})$.
In addition, $\mathcal D$ has $m$ hyperarcs whose tail has the cardinality three.
To simplify the following discussion, we assume that $n + m + 1$ is a power of two.
If $n+m+1$ is not a power of two, we add one dummy variable $z$ and dummy clauses $(z \lor z \lor z)$ to make $n+m+1$ a power of two.

For each clause $C_j = \ell^j_1 \lor \ell^j_2 \lor \ell^j_3$, we add one vertex $c_j$ and 
replace a hyperarc $(\set{\bar \ell^j_1, \bar \ell^j_2, \bar \ell^j_3}, \set{c})$ with two hyperarcs 
$(\set{\bar \ell^j_1, \bar \ell^j_2}, \set{c_j})$ and
$(\set{c_j, \bar \ell^j_3}, \set{c})$.
We replace a hyperarc $(\set{c, y_1, \ldots, y_n}, \set{t})$ with $A = (\set{c, c_1, \ldots, c_m, y_1, \ldots, y_n}, \set{t})$.
As a result of this modification, all hyperarcs except one have tails with cardinality at most two.
This modification adds only $\set{c_j}$ for each $1 \le j \le m$ as new minimal $s$-$t$ separators since $\set{c_j}$ is a minimal $s$-$t$ separator.
Applying the similar technique as in the previous section, 
we add vertices $d_2, \ldots, d_{n + m}$, and
replace a hyperarc $A = (\set{y_1, \ldots, y_n, c_1, \ldots, c_m, c}, \set{t})$ with
a set of hyperarcs $\inset{(\set{w_{2i}, w_{2i+1}}, \set{w_{i}})}{1 \le i \le n+m}$, 
where $w_1 = t$, $w_i = d_i$ if $i$ is at least $2$ and at most $n + m$,
$w_{n + m + 1} = c$, $w_i = c_{i - n - m - 1}$ if $i$ is at least $n + m + 2$ and at most $n + 2m + 1$, otherwise, $w_i = y_{i - n - 2m - 1}$.
Applying this modification, we obtain new $\order{n + m}$ minimal $s$-$t$ separators.
Therefore, by adding such small minimal separators to $\mathcal X(\phi)$,
the set of minimal separators that is not contained in $\mathcal X(\phi)$ does not change.


\begin{theorem}
    \textsc{Another Minimal $s$-$t$ Separator} on a $B$-hypergraph $\mathcal D = (V, \mathcal A)$ is \NP-complete even if $\size{\mathcal X} \le \size{V}$ and $\size{T(A)} \le 2$ for any $A \in \mathcal A$.
    Therefore, there are no output-polynomial time algorithms for enumerating all minimal $s$-$t$ separators in a $B$-hypergraph unless $\P = \NP$ even if $\size{T(A)} \le 2$ holds for any $A \in \mathcal A$.    
\end{theorem}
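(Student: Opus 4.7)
The plan is to combine the reduction from 3-\SAT{} given in the preceding lemma with the binary-tree decomposition described just before the theorem statement, then conclude the enumeration hardness via the folklore argument spelled out at the start of \Cref{subsec:AIP}. \NP-membership of \textsc{Another Minimal $s$-$t$ Separator} is immediate: for a candidate $X$, one can check in polynomial time that $t$ is not $B$-connected from $s$ in $\mathcal D[V\setminus X]$, that returning any single vertex of $X$ restores $B$-connectivity (witnessing minimality), and that $X\notin\mathcal X$. So the entire task is to verify that after the transformation the produced instance still decides satisfiability, has all tails of cardinality at most two, and has $|\mathcal X|\le|V|$.

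For the tail-size refinement I would apply verbatim the construction already sketched: split each ternary clause hyperarc $(\set{\bar\ell^j_1,\bar\ell^j_2,\bar\ell^j_3},\set{c})$ into two binary hyperarcs via a fresh vertex $c_j$, and decompose the wide hyperarc $(\set{c,c_1,\ldots,c_m,y_1,\ldots,y_n},\set{t})$ through a balanced binary tree over fresh internal vertices $d_2,\ldots,d_{n+m}$, padding with dummy clauses $(z\lor z\lor z)$ so that $n+m+1$ is a power of two. Call the resulting $B$-hypergraph $\mathcal D_\phi'$; every hyperarc now has at most two tails.

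The main technical point is to verify a bijective correspondence between the "non-trivial" minimal $s$-$t$ separators: namely, those of $\mathcal D_\phi$ outside $\mathcal X(\phi)$ and those of $\mathcal D_\phi'$ outside the enlarged canonical set $\mathcal X'(\phi)$. I would argue that blocking $B$-connectivity from $s$ to $t$ through the balanced binary tree is equivalent to blocking it through the original wide hyperarc, since an internal vertex $d_i$ is $B$-connected from $s$ in the tree iff both of its children are; symmetrically, the two-hyperarc split $(\set{\bar\ell^j_1,\bar\ell^j_2},\set{c_j}),(\set{c_j,\bar\ell^j_3},\set{c})$ transmits $B$-connectivity to $c$ iff the original ternary hyperarc does. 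The newly introduced singletons $\set{d_i}$ and $\set{c_j}$ are each minimal $s$-$t$ separators on their own and are added to $\mathcal X'(\phi)$; counting gives $|\mathcal X'(\phi)|=\order{n+m}$, which is bounded by $|V(\mathcal D_\phi')|$ as required.

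Finally, the enumeration corollary follows from the generic reduction already recalled after the problem definitions: were there an output-polynomial enumeration algorithm for minimal $s$-$t$ separators in $B$-hypergraphs, simulating it for $(|V|+|\mathcal X'(\phi)|)^c$ steps on $\mathcal D_\phi'$ with $\mathcal X'(\phi)$ treated as the already-produced prefix would decide \textsc{Another Minimal $s$-$t$ Separator} in polynomial time, hence $\P=\NP$. The most delicate obstacle I expect is excluding "hybrid" minimal separators of $\mathcal D_\phi'$ that mix vertices of the $x_i/\bar x_i$ layer with internal tree vertices $d_i$ or with auxiliary vertices $c_j$: minimality forces every such spurious vertex to be removable without reconnecting $t$ to $s$, which collapses the separator to one supported on the original variable layer and restores the correspondence with satisfying assignments.
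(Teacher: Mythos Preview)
Your plan tracks the paper's sketch closely, but both share a gap that the paper glosses over with the sentence ``this modification adds only $\{c_j\}$ as new minimal $s$-$t$ separators.'' That claim is false, and your argument inherits the error. Once the clause hyperarc is split and $c_j$ is placed among the leaves of the binary tree feeding $t$, the vertex $c_j$ becomes essential for the $B$-connectivity of $t$. But $c_j$ has a \emph{unique} incoming hyperarc $(\{\bar\ell^j_1,\bar\ell^j_2\},\{c_j\})$, so removing the single literal vertex $\bar\ell^j_1$ already disconnects $c_j$ and hence $t$. Thus $\{\bar\ell^j_1\}$ is a minimal $s$-$t$ separator of $\mathcal D_\phi'$ lying entirely in the literal layer, containing no complementary pair, and not belonging to your $\mathcal X'(\phi)$. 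Such separators exist for every clause regardless of whether $\phi$ is satisfiable, so the intended equivalence ``$\phi$ satisfiable $\Leftrightarrow$ some minimal separator lies outside $\mathcal X'(\phi)$'' fails outright.

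Your ``hybrid'' paragraph correctly disposes of separators that \emph{contain} some $c_j$ or $d_i$ (those collapse to the corresponding singleton), but the problematic separators here contain neither: they sit purely in the literal layer and separate by killing the incoming hyperarc of some $c_j$ rather than by hitting every clause gadget. Patching $\mathcal X'(\phi)$ with these literal singletons does not obviously rescue the reduction either, since after excluding them the remaining literal-layer separators are governed by the third-position literals $\bar\ell^j_3$ alone and no longer encode satisfiability of $\phi$; and the alternative of \emph{not} putting the $c_j$'s under the tree introduces new separators built from subsets of $\{c_1,\dots,c_m\}$. So the tail-size-two refinement needs a more careful gadget than the one you (and the paper) describe.
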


\subsection{\texorpdfstring{$s$-$t$}{s-t} hyperpath Enumeration}\label{subsec:bpath:bf}
Let $\mathcal H = (V, \mathcal E)$ be a hypergraph.
We construct a directed hypergraph $\mathcal D_{\mathcal H} = (V_{\mathcal H}, \mathcal A_\mathcal H)$ as follows.
The set of vertices $V_{\mathcal H}$ consists of $\mathcal E \cup \set{s, t}$.
For each $v \in V$, we add a hyperarc $(\set{s}, \mathcal E_v)$, 
where $\mathcal E_v$ is the set of hyperedges in $\mathcal E$ that contain $v$.
Note that $\mathcal E_v$ is the set of hyperedges in $\mathcal H$, but
it is the set of vertices in $\mathcal D_{\mathcal H}$.
Thus, $(\set{s}, \mathcal E_v)$ is a pair of distinct sets of vertices in $V_{\mathcal H}$.
Additionally, we add hyperarc $(\mathcal E, \set{t})$.
Since each hyperarc $A \in \mathcal A_\mathcal H$ satisfies either $\size{T(A)} = 1$ or $\size{H(A)} = 1$,
$\mathcal D_\mathcal H$ is a $BF$-hypergraph.

\begin{lemma}\label{lem:bij:bf}
    Let $\mathcal H$ be a hypergraph and $\mathcal D_{\mathcal H}$ be a $BF$-hypergraph.
    There is a bijection between 
    the set of minimal transversals in $\mathcal H$ and the set of $s$-$t$ hyperpaths in $\mathcal D_\mathcal H$.    
\end{lemma}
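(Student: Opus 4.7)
The plan is to set up the bijection explicitly and then verify both directions by exploiting the very rigid structure of $\mathcal D_\mathcal H$: there is only one hyperarc whose head contains $t$, and each vertex of $\mathcal D_\mathcal H$ corresponding to a hyperedge $e \in \mathcal E$ has incoming hyperarcs only of the form $(\set{s}, \mathcal E_v)$ with $v \in e$. These two facts essentially force every $s$-$t$ hyperpath to have a canonical shape.

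Concretely, I would define the map $\Phi$ sending a subset $U \subseteq V$ to the hyperarc set $\Phi(U) = \set{(\set{s}, \mathcal E_v) : v \in U} \cup \set{(\mathcal E, \set{t})}$, and argue that $\Phi$ restricts to a bijection between the minimal transversals of $\mathcal H$ and the $s$-$t$ hyperpaths of $\mathcal D_\mathcal H$. (For $\Phi$ to be injective on $V$-subsets, one should assume, without loss of generality, that $v \mapsto \mathcal E_v$ is injective on $V$; otherwise duplicate vertices of $\mathcal H$ can be collapsed without changing the set of minimal transversals.)

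For the forward direction, suppose $U$ is a minimal transversal. Then $s$ is $B$-connected to itself; applying each hyperarc $(\set{s}, \mathcal E_v)$ for $v \in U$ shows that every $e \in \mathcal E_v$ is $B$-connected from $s$, and since $U$ is a transversal the union $\bigcup_{v \in U} \mathcal E_v$ equals all of $\mathcal E$, so firing $(\mathcal E, \set{t})$ reaches $t$. For minimality, removing $(\mathcal E, \set{t})$ leaves $t$ without any incoming hyperarc; removing some $(\set{s}, \mathcal E_v)$ leaves $U \setminus \set{v}$ as the only way to cover $\mathcal E$, and by minimality of $U$ there is some $e \in \mathcal E$ not covered, so $e$ is no longer $B$-connected and neither is $t$.

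For the reverse direction, let $P$ be any $s$-$t$ hyperpath. Because $(\mathcal E, \set{t})$ is the unique hyperarc with $t$ in its head, $P$ must contain it; the remaining hyperarcs must all have their heads inside $\mathcal E$, so each has the form $(\set{s}, \mathcal E_v)$ for some $v$, and we collect those $v$'s into a set $U \subseteq V$. That $t$ is $B$-connected forces every $e \in \mathcal E$ to be $B$-connected, which means each $e$ contains some $v \in U$, i.e., $U$ is a transversal; the inclusion-minimality of $P$ translates, by the same argument as in the forward direction, into minimality of $U$. The main (only) obstacle is the purely cosmetic injectivity issue described above; the structural reasoning itself is short because the construction concentrates all freedom into the choice of which $F$-hyperarcs $(\set{s}, \mathcal E_v)$ participate.
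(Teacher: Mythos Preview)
Your proposal is correct and follows essentially the same route as the paper: define the map $\Phi$ (called $f$ in the paper) sending $U$ to $\{(\{s\},\mathcal E_v):v\in U\}\cup\{(\mathcal E,\{t\})\}$, check that minimal transversals go to $s$-$t$ hyperpaths, and then verify injectivity and surjectivity. Your remark about the need for $v\mapsto\mathcal E_v$ to be injective is a genuine technical point that the paper's injectivity argument silently assumes; your fix of collapsing duplicate vertices is the natural one.
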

\begin{proof}

 Let $\mathcal H$ be a hypergraph and $\mathcal D_{\mathcal H}$ be a $BF$-hypergraph obtained by the above procedure.
    We define a function $f: \mathrm{Tr}(\mathcal H) \to 2^{\mathcal A_{\mathcal H}}$ such that $f(T) = \inset{(\set{s}, \mathcal E_v)}{v \in T} \cup \set{(\mathcal E, \set{t})}$, where $T$ is a minimal transversal of $\mathcal H$.

    We show that for a minimal transversal $T$, $f(T)$ is an $s$-$t$ hyperpath.
    To this end, we show that $t$ is $B$-connected from $s$ in $\mathcal D_\mathcal H[f(T)]$.
    Since $T$ is a transversal, any vertex in $V(\mathcal D_{\mathcal H}) \setminus \set{s, t}$ is connected to $s$.
    From the definition of $f(T)$, $f(T)$ contains a hyperarc $(\mathcal E, \set{t})$.
    Since any vertex in $V(\mathcal D_{\mathcal H}) \setminus \set{s, t}$ is connected to $s$ and $f(T)$ contains $(\mathcal E, \set{t})$, $t$ is connected to $s$.
    We next show the minimality of $f(T)$ by contradiction.
    Suppose that there is an $s$-$t$ hyperpath $P_{st}$ contained in $f(T)$.
    If $(\mathcal E, \set{t}) \not\in P_{st}$, $P_{st}$ is not $s$-$t$ hyperpath.
    Thus, a hyperarc $A \in f(T) \setminus P_{st}$ satisfies $T(A) = \set{s}$ and $H(A) = \mathcal E_v$ for some $v \in V(\mathcal H)$.
    However, since $T$ is a minimal transversal, 
    there is a hyperedge $E \in \mathcal E_v$ such that $T \cap E = \set{v}$.
    In this case, $E$ is not connected to $s$.
    Therefore, $f(T)$ is an $s$-$t$ hyperpath.

    We next show that $f$ is a bijection between $\mathrm{Tr}(\mathcal H)$ and the set of all $s$-$t$ hyperpaths in $\mathcal D_\mathcal H$.
    We first show that $f$ is injective.
    Let $T_1$ and $T_2$ be two distinct minimal transversals.
    Since $T_1$ and $T_2$ are distinct, there are vertices 
    $v_1 \in T_1 \setminus T_2$ and
    $v_2 \in T_2 \setminus T_1$.
    Therefore,
    $f(T_1) \setminus f(T_2)$ contains a hyperarc $(\set{s}, \mathcal E_{v_1})$ and
    $f(T_2) \setminus f(T_1)$ contains a hyperarc $(\set{s}, \mathcal E_{v_2})$, and
    $f$ is injective.

    We next show that $f$ is surjective.
    Let $P_{st}$ be an $s$-$t$ hyperpath in $\mathcal D_{\mathcal H}$.
    From the construction of $\mathcal D_{\mathcal H}$,
    each hyperarc $A$ satisfying $T(A) = \set{s}$ corresponds to a vertex in $\mathcal H$.
    Thus, we consider a set of vertices $T$ corresponds to $P_{st} \setminus \set{(\mathcal E, \set{t})}$.
    Since $P_{st}$ is an $s$-$t$ hyperpath, each vertex $v \in T$ has a hyperedge $E$ such that $T \cap E = \set{v}$.
    Therefore, $T$ is a minimal transversal and $f$ is a bijection.
\end{proof}

\begin{thmrep}
    If there is an output-polynomial time algorithm for \textsc{$s$-$t$ hyperpath Enumeration} on a $BF$-hypergraph $\mathcal D$, 
    there is an output-polynomial time algorithm for \textsc{Minimal Transversal Enumeration}.
\end{thmrep}
\begin{proof}
    Suppose that there is an output-polynomial time algorithm for \textsc{$s$-$t$ hyperpath Enumeration} on a $BF$-hypergraph.
    For a hypergraph $\mathcal H = (V, \mathcal E)$, we construct $\mathcal D_{\mathcal H}$ in
    $\order{||\mathcal H||}$ time, where $||\mathcal H||$ is the sum of the number of vertices and the cardinality of hyperedges.
    Since we have an output-polynomial time algorithm for enumerating all $s$-$t$ hyperpaths in $\mathcal D_{\mathcal H}$,
    we obtain the set of all $s$-$t$ hyperpaths in $\mathcal D_{\mathcal H}$.
    From \Cref{lem:bij:bf}, there is a bijection between the set of minimal transversals in $\mathcal H$ and the set of all $s$-$t$ hyperpaths in $\mathcal D_{\mathcal H}$.
    Moreover, we can restore $\mathrm{Tr}(\mathcal H)$ in $\order{||\mathcal H|| \cdot \mathrm{Tr}(\mathcal H)}$ time.
    Therefore, we obtain an output-polynomial time algorithm for \textsc{Minimal Transversal Enumeration}.
\end{proof}

\section{\texorpdfstring{Polynomial-delay Enumeration of $s$-$t$ hyperpaths for $B$-hypergraphs}{Polynomial-delay Enumeration of s-t hyperpaths for B-hypergraphs}}
We give a positive result of the $s$-$t$ hyperpath enumeration for a $B$-hypergraph.
In this section, a directed hypergraph may have parallel hyperarcs, that is, there may be hyperarcs with the same pair of tail set and head.
The proposed algorithm solves a slightly more general problem than the $s$-$t$ hyperpath enumeration.
We first extend the concept of an $s$-$t$ hyperpath.

Let $S$ be a set of vertices.
We define \emph{$B$-connection from $S$} as follows:
(i)  a vertex $s \in S$ is $B$-connected from $S$ and
(ii) if there is a hyperarc $A$ such that all the vertices in $T(A)$ are $B$-connected from $S$, then $H(A)$ is $B$-connected from $S$.
For a set of vertices $S$ and $T$,
an inclusion-wise minimal set of hyperarcs $P_{ST}$ is an \emph{$S$-$T$ hyperpath} if 
all vertices
in $T$ is $B$-connected from $S$.
It should be noted that a $S$-$T$ hyperpath is also a generalization of 
a minimal directed Steiner tree~\cite{10.1145/3517804.3524148}.
Hereafter, we give a polynomial-delay enumeration algorithm for enumerating all $S$-$T$ hyperpaths.
We first introduce a characterization for an $S$-$T$ hyperpath in a $B$-hypergraph.


\begin{lemma}\label{obs:hpath}
    A set of hyperarcs $P_{ST}$ is an $S$-$T$ hyperpath if and only if the followings hold:
    \begin{enumerate}
        \item There is a sequence of hyperarcs $(A_1, \ldots, A_{\size{P_{ST}}})$ such that for each $A_i$, $T(A_i) \subseteq S \cup \bigcup_{1 \le j \le i-1} H(A_j)$,
        \item $|\{A\in P_{ST} \mid \set{v} = H(A)\}| = 0$ if $v \in S$, 
        otherwise $|\{A\in P_{ST} \mid \set{v} = H(A)\}| = 1$\label{item:minimal_BsT_to_much_in}, and
        \item  $|\{A\in P_{ST} \mid v \in T(A)\}| \ge 1$ if $v \in V(\mathcal D[P_{ST}]) \setminus (T \cup S)$.\label{item:minimal:hyperarcs}
    \end{enumerate}
\end{lemma}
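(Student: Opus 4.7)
The plan is to prove both directions of the characterization, exploiting the defining feature of a $B$-hypergraph: every hyperarc has a single-vertex head, so for each vertex $v$ we may speak of \emph{the} (possibly several) hyperarc(s) that ``produce'' $v$. I will first fix the convention that $V(\mathcal D[P_{ST}])$ denotes the set of vertices appearing in some hyperarc of $P_{ST}$, and I will treat condition~2 as quantifying over such $v$.

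For the forward direction, assume $P_{ST}$ is an $S$-$T$ hyperpath. Condition~1 is immediate from the inductive definition of $B$-connectivity applied to $P_{ST}$: take any linear extension of the ``needs-to-be-activated-before'' relation on the hyperarcs — formally, build the sequence greedily by repeatedly picking an as-yet-unused hyperarc whose tail lies in $S$ together with the heads already chosen, which must exist for every step because $T$ is $B$-connected using exactly the hyperarcs of $P_{ST}$. For condition~2, observe that if some $v \in S$ were the head of an $A \in P_{ST}$, then $P_{ST} \setminus \{A\}$ would still $B$-connect $T$ from $S$ (since $v$ is already available from $S$ and no other vertex relies on $A$ because $B$-hyperarcs have a single head), contradicting minimality; and for $v \in V(\mathcal D[P_{ST}]) \setminus S$ the number of hyperarcs with head $\{v\}$ is at least one (else $v$ is not $B$-connected, yet $v$ lies in some $T(A')$ or equals $v\in T$), and at most one (otherwise removing a duplicate producer keeps $v$ $B$-connected, again contradicting minimality). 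For condition~3, if $v \in V(\mathcal D[P_{ST}]) \setminus (T \cup S)$ never appears as a tail, then by condition~2 the unique hyperarc $A$ with $H(A)=\{v\}$ can be removed without disconnecting any vertex of $T$, contradicting minimality.

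For the backward direction, assume the three conditions hold. Condition~1 immediately yields that every head in the sequence is $B$-connected from $S$, so in particular every vertex of $T$ is $B$-connected from $S$ (for $v \in T \cap S$ this is trivial; for $v \in T \setminus S$, condition~2 guarantees a producing hyperarc, which appears in the sequence). It remains to establish minimality. Suppose for contradiction that some hyperarc $A_0 \in P_{ST}$ with $H(A_0)=\{v_0\}$ can be removed while keeping $T$ $B$-connected from $S$. Then in $P_{ST}\setminus\{A_0\}$ the vertex $v_0$ is still $B$-connected from $S$, and by condition~2 there must be another hyperarc with head $\{v_0\}$ in $P_{ST}\setminus\{A_0\}$ — but condition~2 forbids this (for $v_0 \notin S$) or condition~2 already forbids $A_0$ itself (for $v_0 \in S$). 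So $v_0$ is \emph{not} $B$-connected in $P_{ST}\setminus\{A_0\}$. If $v_0 \in T$, we immediately contradict the assumption; otherwise condition~3 forces $v_0 \in T(A')$ for some $A' \in P_{ST}\setminus\{A_0\}$, whose head $v_1 = H(A')$ is now also unreachable in any ordering, and iterating along heads yields a descending chain of vertices that all become unreachable.

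The main obstacle is closing this iteration cleanly. The natural way is to fix the ordering guaranteed by condition~1, let $A_0$ be the \emph{earliest} removable hyperarc, and define $v_i$ recursively as the head of the earliest hyperarc whose tail contains $v_{i-1}$; by condition~3 the sequence can always continue from $v_i \notin T$, and it must eventually reach a vertex in $T$ because the sequence is finite and the ordering is strict. This vertex of $T$ is then not $B$-connected in $P_{ST}\setminus\{A_0\}$, the desired contradiction. With this finite-descent argument in place, both directions are complete.
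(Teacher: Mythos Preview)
Your proof is correct. The forward direction is essentially the paper's argument (greedy layering for condition~1, minimality for conditions~2 and~3), only spelled out in a bit more detail. The backward direction, however, genuinely differs in how minimality is established. The paper assumes a proper minimal subset $P'_{ST}\subsetneq P_{ST}$, picks the \emph{largest} hyperarc $A$ in $P_{ST}\setminus P'_{ST}$ with respect to the ordering from condition~1, and then looks only one step forward: if $H(A)\notin T$, condition~3 supplies a later hyperarc $A'$ containing $H(A)$ in its tail, and the extremal choice of $A$ immediately forces a contradiction (either $A'\notin P'_{ST}$ violates the maximality of $A$, or $A'\in P'_{ST}$ is useless in $P'_{ST}$, violating its minimality). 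You instead remove a single hyperarc $A_0$ (legitimate by monotonicity of $B$-connectivity) and build a forward chain $v_0,v_1,\ldots$ of unreachable heads, using condition~1 together with the uniqueness from condition~2 to show the associated hyperarc positions are strictly increasing, until condition~3 fails and the chain lands in $T$. Your chain argument is a bit longer but more explicit and does not require the auxiliary assumption that the smaller set is itself minimal; the paper's extremal argument is shorter because the clever choice of $A$ collapses the chain to a single step. As a minor note, your stipulations that $A_0$ be the ``earliest removable'' arc and that $A^{(i+1)}$ be the ``earliest'' arc with $v_i$ in its tail are not actually used---any choice works, since the position inequality $A^{(i)}<A^{(i+1)}$ follows from condition~1 and the uniqueness of producers alone.
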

\begin{proof}
    Suppose that $P_{ST}$ satisfies the three conditions and $(A_1, \ldots, A_{\size{P_{ST}}})$ is an ordering of $P_{ST}$.
    From the condition~1 and 2, 
    all vertices in $\mathcal D[P_{ST}]$ is $B$-connected from $S$.
    We show that the minimality of $P_{ST}$ by contradiction.
    Suppose that $P'_{ST} \subset P_{ST}$ is an $S$-$T$ hyperpath.
    Let $A$ be the largest hyperarc in $P'_{ST} \setminus P_{ST} $ with respect to the order of $P_{ST}$.
    If the head of $A$ is contained in $T$, 
    $t$ is not $B$-connected from $S$ since $P'_{ST}$ has no hyperarcs with $t$ as its head.
    Thus, the head of $A$ is not in $T$.
    If the head of $A$ is contained in $S$, it contradicts the condition~2.
    Thus, it is contained in $V \setminus (S \cup T)$.
    From the condition~3, $P_{ST}$ contains a hyperarc $A'$ that contains the head of $A$ as its tail.
    From the condition~1, $A'$ is larger than $A$ with respect to the order $(A_1, \ldots, A_{\size{P_{ST}}})$.
    If $A'$ is contained in $P'_{ST}$, it contradicts the minimality of $P'_{ST}$ since removing $A'$ from $P'_{ST}$ does not change $B$-connectivity for other vertices.
    If $A'$ is not contained in $P'_{ST}$,
    it contradicts the condition of $A$ since $A'$ is larger than $A$ with respect to the order of $P_{ST}$, and
    $P_{ST}$ is an $S$-$T$ hyperpath.

    Suppose that $P_{ST}$ is an $S$-$T$ hyperpath.
    Since $P_{ST}$ is an $S$-$T$ hyperpath, 
    the conditions~2 and 3 are satisfied, otherwise, it contradicts the minimality of $P_{ST}$.
    Therefore, we  show that $P_{ST}$ has an order satisfying the condition~1.
    We order $P_{ST}$ as follows.    
    Let $\mathcal A_1$ be the set of hyperarcs such that $T(A) \subseteq S$.
    We recursively define $\mathcal A_j$ as $\inset{A \in P_{ST} \setminus \mathcal A_{i-1}}{T(A) \subseteq S \cup \bigcup_{B \in \mathcal A_{i-1}}H(B)}$.
    Notice that the set of vertices $S \cup \bigcup_{B \in \mathcal A_{i}}H(B)$ is $B$-connected from $S$ in a directed subhypergraph induced by $\bigcup_{1 \le j \le i} \mathcal A_j$.
    If an hyperarc $A \in P_{ST}$ is not contained any $\mathcal A_i$, it contradicts the minimality of $P_{ST}$.
    Moreover, for distinct $k_1$ and $k_2$, 
    $\mathcal A_{k_1}$ and $\mathcal A_{k_2}$ are disjoint.
    Therefore, $\set{\mathcal A_1, \ldots, \mathcal A_\ell}$ gives a partition of $P_{ST}$, where $\ell$ is the maximum integer satisfying $\mathcal A_\ell \neq \emptyset$.
    We consider an order of $P_{ST}$ as $(\mathcal A_1, \ldots, \mathcal A_\ell)$.
    We obtain a desired order of the set of hyperarcs in $P_{ST}$ 
    by ordering the hyperarcs within each $\mathcal A_i$ in an arbitrary order.
\end{proof}    

Our goal is to enumerate all sets of hyperarcs that satisfy the conditions in \Cref{obs:hpath} by a backtracking approach.
To this end, we recursively partition the problem into two subproblems.
This approach is inspired by Read and Tarjan's $s$-$t$ path enumeration algorithm~\cite{ReadTarjan}.
We describe an overview of our algorithm.

In our enumeration algorithm, 
we find an $S$-$T$ hyperpath $P_{ST}$ with an ordering $(A_1, \ldots, A_k = (T_k, h_k))$ in \Cref{obs:hpath}, 
divide enumeration problems into the enumeration of an $S$-$T$ hyperpath that contains $A_{k}$ and does not contain $A_{k}$.
The $S$-$T$ hyperpaths outputted by these two enumeration problems are clearly non-duplicate, 
and their union is equal to the set of $S$-$T$ hyperpaths of the original problem.
One subproblem, enumeration of $S$-$T$ hyperpaths that do not contain $A_{k}$ equals enumeration of $S$-$T$ hyperpaths in $D[\mathcal A \setminus \set{A_{k}}]$. 
The key observation is that the enumeration of $S$-$T$ hyperpaths can also be achieved by modifying $\mathcal D$ and $T$.
We give the details of our algorithm in \Cref{alg:proposed}.

\begin{algorithm}[t]
    \SetAlgoLined
    \SetKwProg{Procedure}{Procedure}{}{}
    \SetKwFunction{EnumHyperpaths}{EnumHyperpaths}
    \DontPrintSemicolon
    \Procedure{\EnumHyperpaths{$\mathcal D = (V, \mathcal A), S, T, P$}}{
        \lIf{$T = \emptyset$}{Output $P$ and \Return}
        \lIf{$\mathcal D$ has no $S$-$T$ hyperpaths}{\Return}
        Let $(A_1, \ldots, A_k = (T_k, h_k))$ be an $S$-$T$ hyperpath.\;
        \EnumHyperpaths{$\mathcal D[\mathcal A \setminus \set{A_k}], S, T, P$}\;
        $\mathcal A', T' \gets \emptyset, (T \cup T_k) \setminus (S \cup \set{h_k})$\tcp*{Note that $\mathcal A'$ is a multiset.}\label{alg:rec:0}
        \ForEach{$B \in \mathcal A \setminus \inset{A'' \in \mathcal A}{H(A'') = h_k}$}{
            \lIf{$h_k \in T(B)$}{
                Add $((T(B) \setminus \set{h_k}) \cup T_k, H(B))$ to $\mathcal A'$\label{alg:diff}.
            }\lElse{
                Add $B$ to $\mathcal A'$.
            }
            }
        
        \EnumHyperpaths{$(V \setminus \set{h_k}, \mathcal A'), S, T', P \cup \set{A_k}$}\label{alg:rec:1}\;        
    }
    \caption{A polynomial-delay and polynomial-space algorithm for enumerating all $S$-$T$ hyperpaths.}
    \label{alg:proposed}
\end{algorithm}


Hereafter, we formally define a division of the problem.
Let $\mathcal D = (V, \mathcal A)$ be a directed hypergraph and $P_{ST}$ be an $S$-$T$ hyperpath and $(A_1, \ldots, A_k = (T_k, h_k))$ be an order of $P_{ST}$ satisfying the conditions in \Cref{obs:hpath}.
We define $\mathcal P_{ST}(\mathcal D)$ as the set of $S$-$T$ hyperpaths in $\mathcal D$.
Furthermore, we introduce $\mathcal P^0_{ST}(\mathcal D, P_{ST}, A_k)$ and $\mathcal P^1_{ST}(\mathcal D, P_{ST}, A_k)$ as the set of $S$-$T$ hyperpaths that do not contain $A_k$ and contain $A_k$, respectively.
Obviously, the set of all $S$-$T$ hyperpaths on $\mathcal D[\mathcal A \setminus \set{A_k}]$ corresponds to $\mathcal P^0_{ST}(\mathcal D, P_{ST}, A_k)$.
The remainder part is how to obtain $\mathcal P^1_{ST}(\mathcal D, P_{ST}, A_k)$ by modifying $\mathcal D$, $S$, and $T$.

Since any $S$-$T$ hyperpath in $\mathcal P^1_{ST}(\mathcal D, P_{ST}, A_k)$ contains $A_k$, 
there are no $S$-$T$ hyperpaths that contain a hyperarc whose head is $h_k$ without $A_k$ from \Cref{obs:hpath}.
Thus, removing all hyperarcs that contain $h_k$ as its head from $\mathcal D$ does not change the set of $S$-$T$ hyperpaths containing $A_k$.
Moreover, $u \in T_k$ is $B$-connected from $S$ in any $S$-$T$ hyperpath in $\mathcal P^1_{ST}(\mathcal D, P_{ST}, A_k)$ from \Cref{obs:hpath}.
Thus, for any $Q \in \mathcal P^1_{ST}(\mathcal D, P_{ST}, A_k)$ and $B \in Q$ such that $T(B)$ contains $h_k$,
any vertex in $(T(B) \setminus \set{h_k}) \cup T_k$ is $B$-connected from $S$ in $\mathcal D[Q]$.
Motivated by these observations, we consider the following modification of hyperarcs in $\mathcal A \setminus \inset{A'' \in \mathcal A}{H(A) = h_k}$.
For a hyperarc $B \in \mathcal A \setminus \inset{A'' \in \mathcal A}{H(A'') = h_k}$,
we define a function $f$ as follows.
If $B$ does not contain $h_k$ as its tail, $f(B) = B$, otherwise, 
$f(B) = ((T(B) \setminus \set{h_k}) \cup T_k, H(B))$.
Based on $f$, we define the multiset of hyperarcs as $\mathcal B = \inset{f(B)}{B \in \mathcal A \setminus \inset{A'' \in \mathcal A}{H(A'') = h_k}}$.
Note that $f$ is a bijection between $\mathcal A \setminus \inset{A'' \in \mathcal A}{H(A'') = h_k}$ and $\mathcal B$.
We show that there is a bijection between the set of all $S$-$T'$ hyperpaths on $\mathcal D'$ and $\mathcal P^1_{ST}(\mathcal D, P_{ST}, A_k)$, 
where $\mathcal D' = (V \setminus \set{h_k}, \mathcal B)$ and $T' = (T \cup T(A_k)) \setminus (S \cup \set{H(A_k)})$.

\begin{lemma}\label{lem:bij_Y_A}
    There is a bijection between $\mathcal P_{ST'}(\mathcal D')$ and $\mathcal P^1_{ST}(\mathcal D, P_{ST}, A_k)$, 
    where $\mathcal D' = (V \setminus \set{h_k}, \mathcal B)$ and $T' = (T \cup T(A_k)) \setminus (S \cup \set{H(A_k)})$.
\end{lemma}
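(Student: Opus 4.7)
The plan is to define explicit maps $\phi: \mathcal{P}^1_{ST}(\mathcal{D}, P_{ST}, A_k) \to \mathcal{P}_{ST'}(\mathcal{D}')$ and $\psi: \mathcal{P}_{ST'}(\mathcal{D}') \to \mathcal{P}^1_{ST}(\mathcal{D}, P_{ST}, A_k)$ via the hyperarc transformation $f$ already introduced above, namely $\phi(Q) := f(Q \setminus \set{A_k})$ and $\psi(Q') := f^{-1}(Q') \cup \set{A_k}$. Since $f$ is bijective between $\mathcal{A} \setminus \inset{A'' \in \mathcal{A}}{H(A'') = h_k}$ and the multiset $\mathcal{B}$, the identities $\phi \circ \psi = \mathrm{id}$ and $\psi \circ \phi = \mathrm{id}$ will follow immediately, so the whole argument reduces to showing that $\phi(Q) \in \mathcal{P}_{ST'}(\mathcal{D}')$ and that $\psi(Q')$ is an $S$-$T$ hyperpath in $\mathcal{D}$ containing $A_k$. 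I will verify each by checking the three conditions of \Cref{obs:hpath}.

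A preliminary observation that I will record first is that $h_k \in T$. This follows because $A_k$ is the \emph{last} hyperarc in the ordering of $P_{ST}$: condition~2 of \Cref{obs:hpath} rules out $h_k \in S$, and if $h_k \notin T \cup S$, condition~3 would force $h_k \in T(A_j)$ for some $j$, which by condition~1 requires $j > k$, contradicting that $A_k$ is last. For $\phi$, given an ordering $(C_1, \ldots, C_q)$ of $Q$ with $A_k = C_{i_0}$, I will take $(f(C_j))_{j \neq i_0}$ as the candidate ordering of $\phi(Q)$; condition~1 holds because $T_k \subseteq S \cup \bigcup_{j < i_0} H(C_j)$, so the substitution $h_k \mapsto T_k$ performed by $f$ preserves tail-reachability, while conditions~2 and 3 transfer from $Q$ by noting that $f$ preserves heads and that $V(\mathcal{D}'[\phi(Q)]) = V(\mathcal{D}[Q]) \setminus \set{h_k}$. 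For $\psi$, I will insert $A_k$ into an ordering of $Q'$ at the earliest position $i_{\min}$ where $T_k$ is fully reached (which exists because $T_k \setminus S \subseteq T'$ is $B$-connected from $S$ in $Q'$), and then apply $f^{-1}$ to recover the original tails.

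The main obstacle I anticipate is condition~3 for $\psi(Q')$ at the vertex $v = h_k$: if $h_k \notin T \cup S$, this condition would demand $h_k$ lie in some tail of $\psi(Q')$, which can fail when $Q'$ uses no $f$-modified hyperarcs (in that event $f^{-1}(Q') = Q'$ would already be an $S$-$T$ hyperpath in $\mathcal{D}$, so $A_k$ would be removable and $\psi(Q')$ would not be minimal). The preliminary observation $h_k \in T$ sidesteps this case entirely by removing $h_k$ from the scope of condition~3. For the remaining $v \in V(\mathcal{D}[\psi(Q')]) \setminus (T \cup S)$, the analysis is clean: if $v \in T_k$ then $v \in T(A_k)$; otherwise $v$ lies in $V(\mathcal{D}'[Q']) \setminus (T' \cup S)$, so condition~3 of $Q'$ supplies a hyperarc $B \in Q'$ with $v \in T(B)$, and since $v \neq h_k$ and $v \notin T_k$, we get $v \in T(f^{-1}(B))$ whether $B$ is modified by $f$ or not.
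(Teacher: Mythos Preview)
Your proposal is correct and constructs the same bijection as the paper (your $\phi$ is the paper's map $g$, and your $\psi$ is its implicit inverse). The difference lies in verification: the paper argues $B$-connectivity of $\psi(Q')$ directly and dispatches minimality in a single line (``otherwise it contradicts the minimality of $Q'$''), while you check the three conditions of \Cref{obs:hpath} on both sides. Your route is more systematic and in fact more complete---the paper never verifies that $g(Q)$ lands in $\mathcal P_{ST'}(\mathcal D')$, which your treatment of $\phi$ does cover. Your preliminary observation that $h_k \in T$ (forced by $A_k$ being last in the ordering of $P_{ST}$) does not appear in the paper because the paper's minimality-by-contradiction argument bypasses condition~3 entirely; for your approach it is exactly the right tool, since it removes $h_k$ from the scope of condition~3 for $\psi(Q')$ and rules out the failure mode you correctly identified.
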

\begin{proof}   
    For each $Q \in \mathcal P^1_{ST}(\mathcal D, P_{ST}, A_k)$, 
    we define the set of hyperarcs $g(Q) = \bigcup_{B \in Q\setminus\set{A_k}}f(B)$.
    We show that $g$ is a bijection between $\mathcal P^1_{ST}(\mathcal D, P_{ST}, A_k)$ and $\mathcal P_{ST'}(\mathcal D')$.
    We first show that $g$ is injective.
    Let $Q_1$ and $Q_2$ be distinct $S$-$T$ hyperpaths in $\mathcal P^1_{ST}(\mathcal D, P_{ST}, A_k)$.
    Thus, 
    both $Q_1 \setminus Q_2$ and
    $Q_2 \setminus Q_1$ are non-empty.
    It implies that $g(Q_1)$ and $g(Q_2)$ are distinct.

    We next show that $g$ is surjective.
    Let $Q'$ be an $S$-$T'$ hyperpath in $\mathcal P_{ST'}(\mathcal D')$.
    Since $Q'$ is an $S$-$T'$ hyperpath, any vertex in $T'$ is $B$-connected from $S$.
    Thus, $h_k$ is $B$-connected from $S$ in $\mathcal D[Q' \cup \set{A_k}]$ since $T_k \subseteq S \cup T'$.
    Moreover, any vertex in $\mathcal D[\set{A_k} \cup \bigcup_{B' \in Q'}f^{-1}(B')]$ is $B$-connected from $S$
    since any vertex in $T_k$ is $B$-connected from $S$ in $\mathcal D[\set{A_k} \cup \bigcup_{B' \in Q'}f^{-1}(B')]$.
    Finally, if $\set{A_k} \cup \bigcup_{B' \in Q'}f^{-1}(B')$ is not an $S$-$T$ hyperpath, then it contradicts the minimality of $Q'$.
    Therefore, $\set{A_k} \cup \bigcup_{B' \in Q'}f^{-1}(B')$ is an $S$-$T$ hyperpath in $\mathcal P^1_{ST}(\mathcal D, P_{ST}, A_k)$, and
    $g$ is surjective.
\end{proof}

\Cref{lem:bij_Y_A} ensures that our branching strategy correctly partitions the set of $S$-$T$ hyperpaths in $\mathcal D$.
Moreover, for each $S$-$T'$ hyperpath $P'_{ST'}$ in $\mathcal D'$, 
it is easy to restore the $S$-$T$ hyperpath $P$ that corresponds to $P'$.
Therefore, by recursively applying this partitioning, we can enumerate all $S$-$T$ hyperpaths, and the following theorem holds.

\begin{thmrep}
    \Cref{alg:proposed} enumerates all $S$-$T$ hyperpaths with $\order{m^2 ||\mathcal A||}$ delay and $\order{m||\mathcal A||}$ space, where
    $m = \size{\mathcal A}$ and
    $||\mathcal A|| = \sum_{A \in \mathcal A}(|H(A)| + |T(A)|)$.
\end{thmrep}

\begin{proof}
    The correctness follows from \Cref{lem:bij_Y_A}, and
    the space complexity is bounded by $\order{m ||\mathcal A||}$
    since the depth of this recursion tree is bounded by $m$, 
    Thus, we give the time complexity analysis.


    In each recursion step, we find an $S$-$T$ hyperpath.
    It can be done in $\order{m ||A||}$ time using $B$-connectivity checking algorithm in \cite{AUSIELLO2017293}.
    In addition, finding $P_{ST}$ is a bottleneck of each recursion procedure without line~\ref{alg:rec:0} and \ref{alg:rec:1}.
    As the depth of this recursion tree is at most $m$, 
    the delay of this algorithm is $\order{m^2 ||A||}$.
\end{proof}

\section{Conclusion}

In this paper, we show that there are no output-polynomial time algorithms for induced $s$-$t$ hyperpath enumeration and minimal $s$-$t$ separator enumeration in a $B$-hypergraph
even if the cardinality of tails is constant, unless $\P=\NP$.
Moreover, enumerating $s$-$t$ hyperpaths in a $BF$-hypergraph is at least as hard as the problem of enumerating the minimal transversals of a hypergraph.
This indicates that
enumeration of $s$-$t$ hyperpaths in a directed hypergraph in output-polynomial time is a challenging problem.
Finally, we give a polynomial-delay algorithm for enumerating all $S$-$T$ hyperpaths.
This algorithm is based on a simple backtracking algorithm inspired by \cite{Read:Tarjan:Networks:1975}.

The remaining problem is the enumeration of minimal $s$-$t$ cuts in a directed hypergraph.
This problem is equivalent to enumerating all maximal signatures/minimal unsatisfiable subformula in Horn~\SAT{} formulae if an input directed hypergraph is a $B$-hypergraph.
Recently, enumerating (maximal) signatures in tractable \SAT{} formulae have been studied~\cite{BERCZI202168,DBLP:journals/corr/abs-2402-18537}.
Nadia et al. showed that in OXR-\SAT{} case, this problem can be solved in incremental polynomial time.

As interesting directions for future research, we consider the improvement of the delay in enumerating $S$-$T$ hyperpaths and 
the existence of an output-quasi polynomial time enumeration algorithm, that is, an enumeration algorithm that runs in $(n + N)^{\polylog {(n + N)}}$ time, for $S$-$T$ hyperpaths in general directed hypergraphs.
The enumeration of $s$-$t$ paths in undirected graphs is a well-studied topic, and certain optimal algorithms are already known~\cite{Birmele:Ferreira:SODA:2013}. 
Furthermore, for the enumeration of minimal Steiner trees, which generalizes directed $s$-$t$ paths, a linear-delay algorithm has been known~\cite{10.1145/3517804.3524148}.
Investigating how much the delay can be improved remains an interesting open problem.

In addition, it remains an open question whether the $S$-$T$ hyperpath enumeration can also be solved in output-quasi polynomial time 
since the minimal transversal enumeration can be solved in output-quasi-polynomial time.
It would also be interesting to show whether the corresponding another solution problem of \textsc{$s$-$t$ hyperpath Enumeration} is \NP-complete.
Conducting more detailed research on this question would be an interesting direction for future work.


\bibliographystyle{abbrv}
\bibliography{main.bib}

\end{document}